 \def\p{\partial}
 \def\ubar{\bar{u}}
 \def\abar{\bar{a}}
 \def\xb{\mathbf{x}}
 \newtheorem{theorem}{Theorem}[section]
 \newenvironment{proof}[1][Proof]{\begin{trivlist}\item[\hskip \labelsep
{\bfseries #1}]}{\end{trivlist}}
\newcommand{\eref}[1]{(\ref{#1})}
\author{
 Greg Norgard %
    \thanks{Graduate Student,Department of Applied Mathematics.}
 and Kamran Mohseni%
   \thanks{Associate Professor of Aerospace Engineering Sciences; Affiliated faculty in the Applied Mathematics Department.}\\
  {\normalsize\itshape
  University of Colorado, Boulder, Colorado, 80309, US}
 }
\title{An Examination of the Homentropic Euler Equations with Averaged Characteristics}
\begin{document}
\maketitle

\begin{abstract}
This paper examines the properties of the homentropic Euler
equations when the characteristics of the equations have been
spatially averaged. The new equations are referred to as the
characteristically averaged homentropic Euler (CAHE) equations. An
existence and uniqueness proof for the modified equations is given.
The speed of shocks for the CAHE equations are determined.  The
Riemann problem is examined and a general form of the solutions is
presented.  Finally, numerically simulations on the homentropic
Euler and CAHE equations are conducted and the behaviors of the two
sets of equations are compared.
\end{abstract}

\section{Introduction}

This manuscript examines a modification of the homentropic Euler
equations where the characteristics of the equations have been
spatially averaged. The equations examined are
\begin{subequations}
\label{CAHE}
\begin{eqnarray}
\label{CAHEa}
\rho_t+\ubar \rho_x +\rho \frac{\abar}{a} u_x=0\\
\label{CAHEb}
u_t+ \ubar u_x+\frac{a\abar}{\rho}\rho_x=0\\
\label{CAHEc}
\ubar=g \ast u\\
\label{CAHEd} \abar=g \ast a
\end{eqnarray}
\end{subequations}
with $a^2=\gamma \rho^{\gamma-1}.$  These are derived in section
\ref{sectionderivation}.  The inspiration for these equations came
from previous work done on a similarly modified Burgers equation.

The Burgers equation, $u_t+uu_x=0$, is considered a simplistic model
of compressible flow. Multiple independent investigations have been
made into a modified Burgers equation with an averaged convective
velocity
\cite{Mohseni:06l,BhatHS:06a,BhatHS:08a,Norgard:08b,NorgardG:08a,Holm:03a},
\begin{subequations}\label{CFB}
\begin{eqnarray}
u_t +\ubar u_x=0\\
\ubar=g^\alpha \ast u\\
g^\alpha=\frac{1}{\alpha} g(\frac{x}{\alpha}),
\end{eqnarray}
\end{subequations}
where $g$ is an averaging kernel. A brief summary of the
primary results are as follows.

\begin{itemize}
\item[1.] Solutions to Equations (\ref{CFB}) exist and are unique
\cite{Norgard:08b,BhatHS:06a}.

\item[2.] When the initial conditions are $C^1$, the solution  remains $C^1$ for all time
\cite{Norgard:08b}.

\item[3.] For bell shaped initial conditions it was proven that as the averaging approaches
 zero ($\alpha \to 0$), the solutions to Equation (\ref{CFB}) converge
  to the entropy solution of inviscid Burgers Equation \cite{NorgardG:08a}.

\item[4.]  There are a set of discontinuous initial conditions where
solutions to Equation (\ref{CFB}) will converge to a non-entropic
solution, but these solutions are unstable
\cite{NorgardG:08a,BhatHS:08a}.

\item[5.]  It is conjectured that solutions to Equation (\ref{CFB}) will converge to the entropy solution of inviscid Burgers
Equation for all continuous initial conditions, and thus the scheme
\begin{eqnarray}
u_t +\ubar u_x=0\\
\ubar=g^\alpha \ast u\\
u(x,0)=g^\alpha \ast u_0(x),
\end{eqnarray}
will converge to the entropy solution for any bounded initial
condition $u_0$ \cite{NorgardG:08a}.
\end{itemize}

The work on the regularization of the Burgers equation is inspired
by and related to work done on the LANS-$\alpha$ equations
\cite{Holm:04a,Foias:01a,Marsden:01h,Mohseni:03a,Mohseni:05e,Marsden:98b,Chen:98a}.
These equations also employ an averaged velocity in the nonlinear
term and have been successful in modeling some turbulent
incompressible flows.

It is thought that a similar regularization could be accomplished
for the equations that describe compressible flow. Encouraged by the
results for Burgers equation, the next step is to attempt to
introduce averaging into the one-dimensional homentropic Euler
equations, a simplified version of the full Euler equations, where
pressure is purely a function of density. There have been several
attempts at such a regularization.

Using a Lagrangian averaging technique Bhat and Fetecau
\cite{BhatHS:06b} derived the following equations
\begin{eqnarray}
\rho_t+(\rho u)_x=0\\
w_t+(uw)_x-\frac{1}{2}( u^2 + \alpha^2 u_x^2)_x=-\frac{p_x}{\rho}\\
\rho w=\rho v-\alpha^2 \rho_x u_x\\
v=u-\alpha^2 u_{xx}.
\end{eqnarray}
While the solutions to the system remained smooth and contained much
structure it was found that the equations were `` not well-suited
for the approximation of shock solutions of the compressible Euler
equations.''

Another attempt by Bhat, Fetecau, and Goodman used a Leray-type
averaging  \cite{BhatHS:07a} leading to the equations
\begin{eqnarray}\label{LerayAveragedHomentropicEuler}
\rho_t+\ubar \rho_x +\rho u_x=0\\
u_t + \ubar u_x + \frac{p_x}{\rho}=0\\
u=\ubar-\alpha^2 \ubar_{xx}.
\end{eqnarray}
with $p=\kappa \rho ^\gamma$.  They then showed that weakly
nonlinear geometrical optics (WNGO) asymptotic theory predicts the
equations will have  global smooth solutions for $\gamma =1$ and
form shocks in finite time for $\gamma \neq 1$.

Additionally in 2005, H. S. Bhat et. al. \cite{Mohseni:05d} applied
the Lagrangian averaging approach to the full compressible Euler
equations.  Their approach was successful in that a set of
Lagrangian Averaged Euler (LAE-$\alpha$) equations were derived.
However, the equations seemed so intractable that they seemed
impractical for real world applications.

Inspired by the existence uniqueness proofs from the averaged
Burgers equations found in \cite{Norgard:08b,BhatHS:06a}, we average
the characteristics of the homentropic Euler equations to derive
what we term as the characteristically averaged homentropic Euler
(CAHE) equations \eref{CAHE}.  It is the properties of these
equations that this paper examines.

The following section follows the derivation and present the final
equations.  The existence and uniqueness of solutions to the
equations are then proven in section \ref{ExistenceTheoremSection}.
Sections \ref{shockspeedsection} and \ref{ReimannSolutionsSection}
examine the speed of the shocks and solutions to the Riemann
problem.  Numerical simulations and their comparison to those of the
homentropic Euler equations are discussed in sections
\ref{Numericssection} and \ref{Comparisonsection}.  The results are
then briefly summarized in the concluding remarks.

\section{Derivation of the equations}\label{sectionderivation}
\subsection{The homentropic Euler equations}
We begin the process of deriving the CAHE equations by starting with
the homentropic Euler equations.  There are two equations to the
homentropic Euler equations.  Conservation of mass and conservation
of momentum.  Pressure is expressed purely density raised to the
power of $\gamma$.
\begin{subequations}
\label{HomentropicEulerEquations}
\begin{eqnarray}
\label{HomentropicEulerEquationsa}
\rho_t+ (\rho u)_x=0\\
\label{HomentropicEulerEquationsb}
(\rho u)_t+ (\rho u u+\rho^\gamma)_x=0
\end{eqnarray}
\end{subequations}
The equations are then written in primitive variable form
\begin{equation}\label{HomentropicEulerEquationsVector}
\left[
  \begin{array}{c}
    \rho \\
    u \\
  \end{array}
\right]_t +\left[
   \begin{array}{cc}
     u & \rho  \\
     \frac{a^2}{\rho} & u \\
   \end{array}
 \right]
 \left[
  \begin{array}{c}
    \rho \\
    u \\
  \end{array}
\right]_x =0,
\end{equation}
with $a^2=\gamma \rho^{\gamma-1}.$

 To get a clear view of the characteristics involved the equations are then
 diagonalized to obtain the equations
\begin{equation} \label{HomentropicEulerEquationsDiagonalized}
\left[
  \begin{array}{c}
    v^+ \\
    v^- \\
  \end{array}
\right]_t +\left[
   \begin{array}{cc}
     u +a & 0 \\
     0 & u -a \\
   \end{array}
 \right]
 \left[
  \begin{array}{c}
    v^+ \\
    v^- \\
  \end{array}
\right]_x =0
\end{equation}
where
\begin{subequations}
\begin{eqnarray}
v^+=u+\frac{2a}{\gamma-1}\\
v^-=u-\frac{2a}{\gamma-1}.
\end{eqnarray}
\end{subequations}
The variables $v^\pm$ are commonly know as Riemann invariants. From
Equation \eref{HomentropicEulerEquationsDiagonalized} it is easy to
see that the quantity $v^+$ is convected at speed $u+a$ and $v^-$
speed $u-a$.  Thus along the characteristics $u+a$, $v^+$ will
remain constant.

Shocks will form when characteristics intersect.  To prevent this
from happening the characteristics are spatially averaged. This
averaging is conducted by convoluting the variable to be averaged
with an averaging kernel, $g$, and is represented by a bar above the
variable. For example, the averaged velocity would be expressed
\begin{equation}\ubar=g \ast u .\end{equation}

\subsection{The averaging kernel}
Several assumptions on the averaging kernel are made at this point.
The kernel is assumed to be even, for isotropic purposes.  For the
purpose of Theorems \ref{existencetheorem} and
\ref{existencetheorem2} the kernel and its first derivative are
assumed to be integrable. Of special interest is the Helmholtz
filter which is defined as
\begin{equation}
u=\ubar-\ubar_{xx},
\end{equation}
and thus has an averaging kernel of
\begin{equation}
g(x)=\frac{1}{2}\exp(-|x|).
\end{equation}
For all numerical simulations found in Sections
\ref{ReimannSolutionsSection}, \ref{Numericssection}, and
\ref{Comparisonsection} the Helmholtz filter is used for its
convenient inversion techniques.

Furthermore the kernel will be equipped with a parameter $\alpha$
which will control the amount of averaging.  If $g$ is the averaging
kernel, then $\alpha$ is introduced as
\begin{equation}
g^\alpha=\frac{1}{\alpha} g\left( \frac{x}{\alpha} \right).
\end{equation}
Thus as $\alpha \to 0$ the averaging kernel approaches the Dirac
delta function.

\subsection{Obtaining the CAHE equations}
Using the averaging kernel discussed above the characteristics of
the homentropic Euler equations are averaged to obtain the CAHE
equations in diagonalized form.
\begin{equation} \label{CAHEdiagonalized}
\left[
  \begin{array}{c}
    v^+ \\
    v^- \\
  \end{array}
\right]_t +\left[
   \begin{array}{cc}
     \ubar +\abar & 0 \\
     0 & \ubar -\abar \\
   \end{array}
 \right]
 \left[
  \begin{array}{c}
    v^+ \\
    v^- \\
  \end{array}
\right]_x =0.
\end{equation}
Change the equations back into the primitive variable form and we
get the equations
\begin{subequations}
\begin{eqnarray}
\rho_t+\ubar \rho_x +\rho \frac{\abar}{a} u_x=0 \nonumber \\
u_t+ \ubar u_x+\frac{a\abar}{\rho}\rho_x=0 \nonumber \\
\ubar=g \ast u \nonumber \\
\abar=g \ast a \nonumber
\end{eqnarray}
\end{subequations}
with $a^2=\gamma \rho^{\gamma-1}$ which were seen earlier as
Equations \eref{CAHE}.

These are now the equations that will be examined for the rest of
this paper and are referred to as the characteristically averaged
homentropic Euler (CAHE) equations. We begin by proving that
Equations (\ref{CAHE}) have one and only one solution.

\section{Existence and uniqueness theorem}\label{ExistenceTheoremSection}

A critically important property for the Equations (\ref{CAHE}) to
have is that a solution exists. This section addresses this problem
by presenting a proof for the existence and uniqueness of solutions.
This proof uses a method of characteristics approach and is similar
to the existence uniqueness proof of Convectively Filtered Burgers
equation found in \cite{Norgard:08b}

An outline of the proof is as follows.  When the equations are cast
into their characteristic form, it is clear to see that the fluid
velocity and the speed of sound remain bounded. Since those speeds
remain bounded, the first derivative of the averaged speeds will
remain bounded as well. The characteristics are governed by the
averaged speeds and with the first derivatives bounded, the
characteristics may grow closer, but will never intersect.  Thus the
solution can be fully realized by the characteristics and the
initial conditions.

\begin{theorem}
\label{existencetheorem}
Let $g(\mathbf{x})$ $\in$ $W^{1,1}(\mathbb{R})$ and $u_0(x), a_0(x)$ $\in$
$C^1(\mathbb{R})$, then there exists a unique global solution
$u(x,t), a(x,t) \in C^1(\mathbb{R},\mathbb{R})$ to the initial value problem

\begin{subequations}
\label{isentropiceuler}
\begin{eqnarray}
\label{isentropiceulera}
\rho_t+\ubar \rho_x +\rho \frac{\abar}{a} u_x=0\\
\label{isentropiceulerb}
u_t+ \ubar u_x+\frac{a\abar}{\rho}\rho_x=0\\
\label{isentropiceulerc}
\ubar=g \ast u\\
\label{isentropiceulerd}
\abar=g \ast a\\
u(x,0)=u_0\\
a(x,0)=a_0
\end{eqnarray}
\end{subequations}
with $a^2=\gamma \rho^{\gamma-1}.$
\end{theorem}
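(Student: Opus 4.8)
The plan is to recast the system \eref{isentropiceuler} in its diagonalized characteristic form \eref{CAHEdiagonalized}, establish local existence by a contraction mapping, and then use a priori bounds coming from the transport structure to continue the local solution to a global one. Working with the Riemann invariants $v^+ = u + \frac{2a}{\gamma-1}$ and $v^- = u - \frac{2a}{\gamma-1}$ is advantageous because the convolution coefficients $\ubar \pm \abar$ enter only as transport speeds, so the singular factors $\abar/a$ and $a\abar/\rho$ of the primitive form never appear in the analysis, while $u$ and $a$ are recovered linearly from $v^\pm$. I would define, for a given velocity--sound field, the characteristic curves $X^\pm(t;y)$ by $\dot X^\pm = \ubar(X^\pm,t) \pm \abar(X^\pm,t)$ with $X^\pm(0;y)=y$, and observe that $v^+$ is constant along $X^+$ and $v^-$ along $X^-$.

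For local existence I would set up a Picard iteration, equivalently a contraction on a short interval $[0,T]$, on the flow maps. The essential smoothing observation is that since $g \in W^{1,1}(\mathbb{R})$, the averaged fields $\ubar = g \ast u$ and $\abar = g \ast a$ and their spatial derivatives are controlled by the $L^\infty$ norms of $u$ and $a$ through Young's inequality, $\|\ubar\|_\infty \le \|g\|_1\|u\|_\infty$ and $\|\ubar_x\|_\infty = \|g_x \ast u\|_\infty \le \|g_x\|_1\|u\|_\infty$. Consequently the transport speeds are Lipschitz in $x$ with a constant depending only on $\|g\|_{W^{1,1}}$ and the current size of $(u,a)$, the characteristic ODEs have unique solutions, and the induced map on $(v^+,v^-)$ is a contraction for $T$ small. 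This mirrors the existence argument for the convectively filtered Burgers equation in \cite{Norgard:08b}.

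For the a priori bounds, because $v^\pm$ are merely transported, $\|v^\pm(\cdot,t)\|_\infty \le \|v^\pm(\cdot,0)\|_\infty$ for as long as the solution exists; hence $u$ and $a$ remain uniformly bounded in time, and, provided the data keep $a$ bounded away from zero so that no vacuum forms, the pair $(u,a)$ stays in a fixed region of state space. Feeding these bounds back through the Young estimates shows that the Lipschitz constant $L := \|g_x\|_1(\|u\|_\infty + \|a\|_\infty)$ of the transport speeds is uniform in time, which is the rigorous content of the informal claim that the averaged speeds have bounded first derivatives.

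The decisive step, and the one I expect to be the crux, is showing that characteristics of the same family never intersect, since this is exactly what fails for the unaveraged homentropic Euler equations and produces shocks. Comparing two $+$ characteristics issued from $y_1 < y_2$ and using the uniform Lipschitz bound gives $\frac{d}{dt}\,|X^+(t;y_2)-X^+(t;y_1)| \le L\,|X^+(t;y_2)-X^+(t;y_1)|$, so by Gronwall the separation is bounded below by $(y_2-y_1)e^{-Lt} > 0$ and same-family characteristics stay ordered for all finite time. Equivalently the flow-map Jacobian satisfies $\partial X^\pm/\partial y \ge e^{-Lt}$, whence $v^\pm_x(\cdot,t)$ is bounded by $e^{Lt}$ times the initial derivative bound, finite for every finite $t$. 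Since neither the $C^0$ nor the $C^1$ norm of $(u,a)$ blows up in finite time, the standard continuation principle upgrades the local solution to a global $C^1$ solution, and uniqueness follows from a Gronwall estimate on the difference of two solutions using the same Lipschitz control. The point demanding care beyond the Burgers case is that the two characteristic families are coupled through $\ubar$ and $\abar$, so the bounds on $v^+$ and $v^-$ must be closed simultaneously and the positivity of $a$ monitored to remain in the regime where the diagonalization is valid.
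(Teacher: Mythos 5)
Your proposal follows essentially the same route as the paper's own proof: diagonalize to the Riemann invariants $v^\pm$, note they are constant along the characteristics of $\ubar\pm\abar$, use the resulting $L^\infty$ bounds on $u$ and $a$ together with Young's inequality and $g\in W^{1,1}$ to get a uniform Lipschitz bound on the transport speeds, and conclude that the flow-map Jacobians stay bounded away from zero (the paper writes $\phi^+_x=\phi^+_x(0)\exp(\int_0^t \ubar_x+\abar_x\,dt)$, which is your $e^{-Lt}$ lower bound). The contraction-mapping local existence step and the explicit continuation/uniqueness Gronwall argument that you add are details the paper leaves implicit, but the skeleton is identical.
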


\begin{proof} Express Equations (\ref{isentropiceulera}) and (\ref{isentropiceulerb}) with matrices as

\begin{equation}
\left[
  \begin{array}{c}
    \rho \\
    u \\
  \end{array}
\right]_t
+\left[
   \begin{array}{cc}
     \ubar & \rho \frac{\abar}{a} \\
     \frac{a\abar}{\rho} & \ubar \\
   \end{array}
 \right]
 \left[
  \begin{array}{c}
    \rho \\
    u \\
  \end{array}
\right]_x =0
\end{equation}

By diagonalizing the matrix these equations can be rewritten as
\begin{equation} \label{diagonalizedisentropiceuler}
\left[
  \begin{array}{c}
    v^+ \\
    v^- \\
  \end{array}
\right]_t
+\left[
   \begin{array}{cc}
     \ubar +\abar & 0 \\
     0 & \ubar -\abar \\
   \end{array}
 \right]
 \left[
  \begin{array}{c}
    v^+ \\
    v^- \\
  \end{array}
\right]_x =0.
\end{equation}

It is here that we shift our perspective to a method of
characteristics type view. Associate the maps $\phi^\pm$ with the
characteristics of $v^\pm$. Thus we have
\begin{eqnarray}
\label{positivecharacteristic}
\frac{\partial}{\partial t} \phi^+(\xi,t) &=&\ubar(\phi^+(\xi,t),t)+\abar(\phi^+(\xi,t),t)\\
\frac{\partial}{\partial t} v^+&=&0 \qquad \text{along a $\phi^+$
characteristic }
\end{eqnarray}
and
\begin{eqnarray}
\label{negativecharacteristic}
\frac{\partial}{\partial t} \phi^-(\xi,t) &=& \ubar(\phi^-(\xi,t),t)-\abar(\phi^-(\xi,t),t)\\
\frac{\partial}{\partial t} v^-&=&0 \qquad \text{along a $\phi^-$
characteristic }.
\end{eqnarray}

From this we can obtain the estimates that
\begin{equation}
\label{linfinitybound}
\left|\left| v^\pm(x,t) \right|\right|_{L^\infty}=\left|\left| v^\pm(x,0) \right|\right|_{L^\infty}.
\end{equation}

It the mappings $\phi^\pm$ have continuously differentiable
inverses, $\varphi^\pm$, then Equation
(\ref{diagonalizedisentropiceuler}) has the solution
\begin{equation}
\label{solution}
v^\pm(x,t)=v_0^{\pm}(\varphi^\pm(x,t)).
\end{equation}

Sufficient conditions for such inverses to uniquely exist is if the
Jacobians of $\phi^\pm$ are non-zero for all positions and time.
Thus if $J(\phi^\pm) \neq 0$, Equation
(\ref{diagonalizedisentropiceuler}) is uniquely solved by
(\ref{solution}).

For the next section of the proof we will be dealing with $\phi^+$.
The results for $\phi^-$ follow in precisely the same manner. Since
we are dealing with 1D the Jacobian of $\phi^+$ is essentially
$\phi^+_x$.  It is clear from Equation
(\ref{positivecharacteristic}) to see that the time derivative of
$\phi^+_x$ is
\begin{equation}
\frac{\partial}{\partial t}  \phi^+_x= (\ubar_x +\abar_x)  \phi^+_x.
\end{equation}
Thus we see that
\begin{equation}
\phi^+_x=\phi^+_x(0)\exp \left( \int_0^t \ubar_x+\abar_x \,dt \right).
\end{equation}
Thus $\phi^+_x$ will remain non-zero if $\left|\int_0^t \ubar_x+\abar_x \,dt\right| <  \infty$.

First we will show that $u,a \in L^\infty$.  We see that
\begin{eqnarray}
\left|\left| 2u  \right|\right|_{L^\infty}&=& \left|\left| \frac{2a}{\gamma-1}+u + u - \frac{2a}{\gamma-1}  \right|\right|_{L^\infty}\\
&\leq& \left|\left| v^+  \right|\right|_{L^\infty} + \left|\left| v^-  \right|\right|_{L^\infty}.
\end{eqnarray}
Looking at Equation (\ref{linfinitybound}) and noting that
$\left|\left| v^+  \right|\right|_{L^\infty}$ and $\left|\left| v^-
\right|\right|_{L^\infty}$
 are bounded for all time, then $\left|\left| u  \right|\right|_{L^\infty}$ is bounded for all time.
 Similarly one can bound the quantity $\left|\left| a \right|\right|_{L^\infty}$ for all time.

Given that $g(x)$ $\in$ $W^{1,1}(\mathbb{R})$, there exists $M \in
\mathbb{R}$, such that
\begin{equation}
\label{l1bound}
\left|\left| \frac{\partial}{\partial{x} } g \right|\right|_{L^1} \leq M<\infty.
\end{equation}
Knowing that $\frac{\partial}{\partial{x} } g \in L^1$ and $u \in L^\infty$,
we know that $\frac{\partial}{\partial{x} } \ubar$ exists and that
\begin{equation*}
\frac{\partial}{\partial{x} } \ubar =\frac{\partial}{\partial{x} } g \ast u.
\end{equation*}
Similarly
\begin{equation*}
\frac{\partial}{\partial{x} } \abar =\frac{\partial}{\partial{x} } g \ast a.
\end{equation*}

Using Young's inequality we can bound the derivatives $\ubar_x$ and $\abar_x$.  We get
\begin{eqnarray}
\left|\left| \frac{\partial}{\partial{x }} \ubar \right|\right|_{L^\infty}
\leq \left|\left| \frac{\partial}{\partial{x} } g \right|\right|_{L^1}
\left|\left| u \right|\right|_{L^\infty} \leq M \left|\left| u \right|\right|_{L^\infty} \\
\left|\left| \frac{\partial}{\partial{x }} \abar \right|\right|_{L^\infty}
\leq \left|\left| \frac{\partial}{\partial{x} } g \right|\right|_{L^1}
\left|\left| a \right|\right|_{L^\infty} \leq M \left|\left| a \right|\right|_{L^\infty}.
\end{eqnarray}
This leads directly to the bound
\begin{equation}
\left|\int_0^t \ubar_x+\abar_x \,dt\right| <  M \left ( \left|\left|
u \right|\right|_{L^\infty}+\left|\left| a \right|\right|_{L^\infty}
\right) t
\end{equation}

Thus for finite time, the Jacobian of $\phi^\pm$ remains uniquely
invertible, with a continuously differentiable inverse and thus
(\ref{solution}) is a unique $C^1(\mathbb{R}^n)$ solution to
(\ref{diagonalizedisentropiceuler}).  A unique  $C^1(\mathbb{R}^n)$
solution to (\ref{isentropiceuler}) follows accordingly.
\end{proof}

\begin{theorem}
\label{existencetheorem2} Let $g(\mathbf{x})$ $\in$
$W^{1,1}(\mathbb{R})$ and $u_0(x), a_0(x)$ $\in$
$L^\infty(\mathbb{R})$, then there exists a unique global solution
$u(x,t), a(x,t) \in L^\infty(\mathbb{R},\mathbb{R})$ to the initial
value problem

\begin{subequations}
\begin{eqnarray}
\rho_t+\ubar \rho_x +\rho \frac{\abar}{a} u_x=0\\
u_t+ \ubar u_x+\frac{a\abar}{\rho}\rho_x=0\\
\ubar=g \ast u\\
\abar=g \ast a\\
u(x,0)=u_0\\
a(x,0)=a_0
\end{eqnarray}
\end{subequations}

\end{theorem}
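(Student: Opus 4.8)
The plan is to obtain the $L^\infty$ solution as a limit of the smooth solutions furnished by Theorem \ref{existencetheorem}. The crucial observation is that every bound appearing in that proof is controlled by the $L^\infty$ norms of the data and the $W^{1,1}$ norm of $g$ alone, and never by derivatives of $u_0$ or $a_0$. Concretely, I would mollify the data by setting $u_0^\epsilon=\eta_\epsilon\ast u_0$ and $a_0^\epsilon=\eta_\epsilon\ast a_0$ with a standard mollifier $\eta_\epsilon$, so that $u_0^\epsilon,a_0^\epsilon\in C^1(\mathbb{R})$, with $\|u_0^\epsilon\|_{L^\infty}\le\|u_0\|_{L^\infty}$, $\|a_0^\epsilon\|_{L^\infty}\le\|a_0\|_{L^\infty}$, and $u_0^\epsilon\to u_0$, $a_0^\epsilon\to a_0$ almost everywhere. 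Theorem \ref{existencetheorem} then produces unique global $C^1$ solutions $u^\epsilon,a^\epsilon$, with associated Riemann invariants $v^{\pm,\epsilon}$ and characteristic maps $\phi^{\pm,\epsilon}$, and the candidate $L^\infty$ solution will be defined through the characteristic representation \eref{solution}, equivalently as a weak (distributional) solution of the transport form \eref{diagonalizedisentropiceuler}.

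First I would establish $\epsilon$-uniform estimates. Because $\|v^{\pm,\epsilon}(\cdot,t)\|_{L^\infty}=\|v^{\pm,\epsilon}(\cdot,0)\|_{L^\infty}$ by \eref{linfinitybound}, the quantities $\|u^\epsilon\|_{L^\infty}$ and $\|a^\epsilon\|_{L^\infty}$ are bounded for all time by constants depending only on $\|u_0\|_{L^\infty}$ and $\|a_0\|_{L^\infty}$. Young's inequality with \eref{l1bound} then gives uniform bounds $\|\ubar^\epsilon_x\|_{L^\infty},\|\abar^\epsilon_x\|_{L^\infty}\le M(\|u_0\|_{L^\infty}+\|a_0\|_{L^\infty})$, so the advecting fields $\ubar^\epsilon\pm\abar^\epsilon$ are Lipschitz in $x$, uniformly in $\epsilon$ and $t$. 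Exactly as in the previous proof this yields two-sided bounds $0<c(t)\le\phi^{\pm,\epsilon}_x\le C(t)<\infty$ on finite time intervals, uniform in $\epsilon$, so the flows $\phi^{\pm,\epsilon}$ and their inverses $\varphi^{\pm,\epsilon}$ form equicontinuous, locally uniformly bounded families. By Arzel\`a--Ascoli I would extract a subsequence along which $\phi^{\pm,\epsilon}\to\phi^\pm$ and $\varphi^{\pm,\epsilon}\to\varphi^\pm$ locally uniformly.

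It then remains to identify the limit as a solution, and this is where the genuine difficulty lies, since $u,a$ are only bounded and hence $u^\epsilon,a^\epsilon$ converge merely weak-$\ast$. The saving feature is that averaging is a smoothing operation: writing $\ubar^\epsilon=g\ast u^\epsilon$ with $g\in W^{1,1}(\mathbb{R})$ and using the uniform $L^\infty$ bound, the averaged fields and their $x$-derivatives converge locally uniformly, which is enough to pass to the limit in the characteristic ODEs \eref{positivecharacteristic}, \eref{negativecharacteristic} and to confirm that the limiting flow is advected by the limiting averaged velocity; the transported data $v_0^\pm(\varphi^\pm(x,t))$ is then the desired $L^\infty$ solution. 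I expect the main obstacle to be uniqueness, precisely because the map $v_0^\pm\mapsto v_0^\pm\circ\varphi^\pm$ is \emph{not} Lipschitz in the flow when $v_0^\pm$ is merely bounded, so a na\"ive Gr\"onwall estimate comparing two solutions $u_1,a_1$ and $u_2,a_2$ in $L^\infty$ breaks down.

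To overcome this I would route every estimate through the averaged quantities rather than through $u,a$ directly, transferring the smoothing of $g$ onto the flow by a change of variables. For two solutions with flows $\phi_1^\pm,\phi_2^\pm$, substituting $y=\phi_i^+(s)$ in $\ubar_i=g\ast u_i$ rewrites the contribution of $v_0^+$ as $\int v_0^+(s)\,g(x-\phi_i^+(s))\,\phi^+_{i,x}(s)\,ds$, so that the only rough object $v_0^+$ enters as a bounded weight while the difference is taken in the pair $g(x-\phi_i^+(s))$ and $\phi^+_{i,x}(s)$, both of which are controlled using $g,g'\in L^1$ and the uniform Jacobian bounds. This bounds $\|\ubar_1-\ubar_2\|_{L^\infty}$ and $\|\abar_1-\abar_2\|_{L^\infty}$ by the sup-norm differences of the flows, whereupon a Gr\"onwall argument applied to $\frac{\p}{\p t}(\phi_1^\pm-\phi_2^\pm)$ forces $\phi_1^\pm=\phi_2^\pm$, and hence $v^\pm$ and $(u,a)$, to coincide. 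The convolution structure is exactly what makes this change-of-variables trick, and therefore the whole $L^\infty$ theory, feasible.
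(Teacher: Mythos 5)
Your proof takes a genuinely different route from the paper's. The paper's entire proof is the observation that nothing in the argument for Theorem \ref{existencetheorem} ever used the continuity of $u_0,a_0$: the $L^\infty$ bounds on $v^\pm$, hence on $u$ and $a$, come from transport along characteristics; the bounds on $\ubar_x,\abar_x$ come from Young's inequality with $g'\in L^1$ and depend only on $\|u\|_{L^\infty}$ and $\|a\|_{L^\infty}$; and these give the two-sided control of $\phi^\pm_x$ and hence the continuously differentiable inverses $\varphi^\pm$ exactly as before. The representation $v^\pm(x,t)=v_0^\pm(\varphi^\pm(x,t))$ then makes sense verbatim for $v_0^\pm\in L^\infty$ and is the (now discontinuous) solution; existence and uniqueness are inherited wholesale. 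Your mollification-and-compactness scheme reaches the same conclusion but at a real cost. First, to identify the limit you must upgrade the convergence of $u^\epsilon$ from weak-$\ast$ to something like $L^1_{\mathrm{loc}}$, via the composition $v_0^{\pm,\epsilon}\circ\varphi^{\pm,\epsilon}$ and the uniform Jacobian bounds --- doable, but an extra lemma you only sketch. Second, your uniqueness Gr\"onwall is harder than you indicate: after the substitution $y=\phi_i^+(s)$, the difference of the two convolutions contains not only $g(x-\phi_1^+(s))-g(x-\phi_2^+(s))$, which is indeed controlled by $\|g'\|_{L^1}\,\|\phi_1^+-\phi_2^+\|_{L^\infty}$, but also the Jacobian difference $\phi^+_{1,x}-\phi^+_{2,x}$ weighted by $g$. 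Closing a Gr\"onwall loop that includes $\|\phi^\pm_{1,x}-\phi^\pm_{2,x}\|_{L^\infty}$ forces you to compare $\ubar_{1,x}$ and $\ubar_{2,x}$ along two different flows, which wants a Lipschitz modulus for $g'\ast u$, i.e.\ essentially $g\in W^{2,1}$ or an Osgood-type refinement --- a hypothesis the theorem does not grant you. Both obstacles evaporate once you notice that the construction of Theorem \ref{existencetheorem} runs unchanged on $L^\infty$ data, which is exactly the one-sentence proof the paper gives.
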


\begin{proof}
The proof is the same as for Theorem \ref{existencetheorem}.
$\phi^\pm$ still have unique continuously differentiable inverses,
and the solution remains in the same form as Equation
(\ref{solution}), but now lacks continuity due to the initial
conditions.
\end{proof}

It should be noted that different averagings can be used for $\ubar$
and $\abar$. The existence and uniqueness proof will hold as long as
a bound on the first derivatives of $\ubar$ and $\abar$ remains
constant. This is noted as the type of averaging used will affect
shock speed as noted in Section \ref{shockspeedsection}.

\section{Shock speeds}\label{shockspeedsection}
One of the consequences of the previous sections is that for
continuous initial conditions, the solution will remain continuous.
Also for initial conditions with a discontinuity, that discontinuity
will remain. There are no colliding characteristics.  Thus the
``shock'' is really a discontinuity that is being convected.  If
there is a discontinuity in $v^+_0(x)$ then that discontinuity will
travel at speed $\ubar(x^*) +\abar(x^*)$, where $x^*$ is the
location of the discontinuity.  Similarly, if there is a
discontinuity in $v^-_0(x)$ then that discontinuity will travel at
speed $\ubar(x^*) -\abar(x^*)$.

Consider the case where there is a single jump discontinuity in
$v^+$, but otherwise constant.
\begin{subequations}\label{singleshock}
\begin{eqnarray}
v^+(x)=\left\{\begin{array}{ll}
v^+_l & x<0 \\
v^+_r & x\geq 0
\end{array}\right.\\
v^-(x)=C.
\end{eqnarray}
\end{subequations}
This will lead to a jump discontinuity in $u$ and in $a$.

With $\ubar$ and $\abar$ defined as in Equations
(\ref{isentropiceuler}) and the filter $g$ being even and $\int g
=1$ then the speed of the discontinuity in Equations
(\ref{singleshock}) will be
\begin{equation}
s=\frac{u_l+u_r}{2} + \frac{a_l+a_r}{2}
\end{equation}
where $u_l,u_r,a_l$ and $a_r$ are the limiting values of $u$ and $a$
on the left and right side of the discontinuity respectively.

Similarly, if the jump discontinuity existed only in $v^-$ then the
speed of the discontinuity would be
\begin{equation}
s=\frac{u_l+u_r}{2} - \frac{a_l+a_r}{2}.
\end{equation}

For multiple discontinuities or for nonconstant values around the
discontinuities, the values of $\ubar$ and $\abar$ cannot generally
be analytically written . However, for small values of $\alpha$ the
filter will not ``see'' as far and the values of $\ubar$ and $\abar$
will be roughly the average of the values just to the left and right
of the discontinuity.

\subsection{Alternative averagings and shock speeds}

In the above calculation $\ubar$ and $\abar$ were defined  using the
definition
\begin{eqnarray}
\bar{(\cdot)}=g \ast (\cdot)
\end{eqnarray}
This is not, however, the only way to conduct averaging. One
possible alternative is using Favre averaging, a density weighted
average, where
\begin{eqnarray}\label{altbar}
\bar{(\cdot)}=\frac{g \ast (\rho\,\, \cdot)}{g \ast (\rho)}.
\end{eqnarray}

This alternative will not affect the existence uniqueness theorems
established in section \ref{ExistenceTheoremSection}.  It will,
however, change the shock speeds established earlier.

If Equation (\ref{altbar}) is used, the speed of a discontinuity in
$v^+$ will travel at
\begin{equation}
s=\frac{\rho_l u_l+\rho_r u_r}{\rho_l +\rho_r } + \frac{\rho_l
a_l+\rho_r a_r}{\rho_l +\rho_r }.
\end{equation}

Another possibility is to apply a spatial average to the conserved
quantities, density and momentum, and then compute the
characteristic speeds from the the averaged conserved quantities,
\begin{eqnarray}
\tilde{u}=\frac{\overline{\rho u}}{\bar{\rho}}\\
\tilde{a}^2=\gamma (\bar{\rho})^{\gamma-1}.
\end{eqnarray}
If the speed of the characteristics were defined using $\tilde{u}$
and $\tilde{a}$, then the speed of the shocks would be
\begin{equation}
s=\frac{\rho_l u_l+\rho_r u_r}{\rho_l +\rho_r } + \sqrt{\gamma}
\left( \frac{\rho_l+\rho_r}{2} \right)^{ \frac{\gamma-1}{2}}.
\end{equation}
Thus how one chooses to define $\ubar$ and $\abar$ will affect the
speed of the discontinuities.

\section{Riemann solutions}\label{ReimannSolutionsSection}
This section builds upon the results of section
\ref{shockspeedsection} and establishes properties of the solutions
of the CAHE equations for the Riemann problem.  It then continues on
to examine solutions of the CAHE equations if, instead of the
Riemann problem, the initial conditions are slightly perturbed.

\subsection{Riemann problem}

Looking at Equation (\ref{solution}), it is clear that any
discontinuity that exists in the initial condition remains in the
solutions for all time.  Furthermore, no additional discontinuities
will form.

For the Riemann problem, at time $t=0$ a single discontinuity will
exist at the origin:
\begin{eqnarray}
u(x,0)=\left\{\begin{array}{ll}
u_l & x<0 \\
u_r & x\geq 0
\end{array}\right.\\
a(x,0)=\left\{\begin{array}{ll}
a_l & x<0 \\
a_r & x\geq 0
\end{array}\right.
\end{eqnarray}
which can also be rewritten in terms of the variables $v^+$ and
$v^-$.
\begin{eqnarray}
v^+(x,0)=\left\{\begin{array}{ll}
v^+_l & x<0 \\
v^+_r & x\geq 0
\end{array}\right.\\
v^-(x,0)=\left\{\begin{array}{ll}
v^-_l & x<0 \\
v^-_r & x\geq 0
\end{array}\right.
\end{eqnarray}

These discontinuities will travel at the speeds discussed in section
\ref{shockspeedsection}.  We denoted the locations of the
discontinuities in $v^+$ and $v^-$ as $x^-$ and $x^+$. The speeds of
these discontinuities are determined by
\begin{equation}\label{discontinuityspeed}
\frac{\p}{\p t} x^\pm = \ubar(x^\pm) \pm \abar (x^\pm).
\end{equation}
Clearly the discontinuity found in $v^+$ will travel at a faster
speed than the discontinuity in $v^-$.  Thus the solution will
consist of three different areas separated by the locations of $x^-$
and $x^+$.  The solution is then
\begin{eqnarray}
v^+(x,t)=\left\{\begin{array}{ll}
v^+_l & x<x^+(t) \\
v^+_r & x^+(t)\leq x
\end{array}\right.\\
v^-(x,t)=\left\{\begin{array}{ll}
v^-_l & x<x^-(t) \\
v^-_r & x^-(t)\leq x
\end{array}\right.
\end{eqnarray}
If expressed in primitive variables the solution is
\begin{subequations}\label{reimannsolution}
\begin{eqnarray}
u(x,0)=\left\{\begin{array}{ll}
u_l & x<x^-(t) \\
u_m & x^-(t)\leq  x < x^+(t) \\
u_r &  x^+(t) \leq  x
\end{array}\right.\\
a(x,0)=\left\{\begin{array}{ll}
a_l & x<x^-(t) \\
a_m & x^-(t)\leq x<x^+(t) \\
a_r & x^+(t)\leq x
\end{array}\right.
\end{eqnarray}
\end{subequations}
where $u_m=\frac{v^+_l +v^-_r}{2}$ and $a_m=\frac{\gamma-1}{2}
\frac{v^+_l -v^-_r}{2}$.

Thus in general, the Riemann problem will produce the solutions
presented in Equations (\ref{reimannsolution}) where two
discontinuities will propagate dependent upon the speeds of the
averaged velocity and speed of sound.  If the initial conditions
happen to be chosen, such that $v^+$ or $v^-$ are constant, then
there will exist only one traveling discontinuity.  The following
subsection show numerical simulations of a typical Riemann solution.

\subsubsection{Riemann solution numerics}\label{RiemannNumericsSection}
In section \ref{Numericssection}, a numerical scheme is described
that numerically simulates the behavior of the CAHE equations.
However in section \ref{peturbedRiemannSection} it is demonstrated
that some of the solutions to the Riemann problem are unstable and
thus cannot be capture by the numerical scheme.  Thus this numerical
scheme was developed.  Since the Riemann problem solutions are of
the form presented in Equations (\ref{reimannsolution}), finding
solutions, even unstable ones, to the Riemann problem reduces to
tracking the locations of the discontinuities in time. The following
numerical scheme does precisely this.

To find the solution (\ref{reimannsolution}), the values $u_{l,m,r}$
and $a_{l,m,r}$ are all known, one simply has to determine the
location of $x^\pm$.  This is done by using Equation
(\ref{discontinuityspeed}).  A typical iteration is as follows.

\begin{itemize}
\item[1.] The values of $u(x,t_1)$ and $a(x,t_1)$ are known on a uniform grid.
\item[2.] The Helmholtz operator is then numerically inverted to give the values of $\ubar$ and $\abar$ on the uniform grid.  These values are then interpolated to give the values of $\ubar \pm \abar$ at $x^\pm$.
\item[3.] The positions of the $x^\pm$, are then advanced in time by the values of $\ubar \pm \abar$ on $x^\pm$ calculated in the previous step.
\item[4.] The values of $u(x,t_2)$ and $a(x,t_2)$ are now known using Equation (\ref{reimannsolution}) and current positions of $x^\pm$.
\end{itemize}
The inversion of the Helmholtz operator is described in section
\ref{Numericssection}.

To examine a typical case, we will simulate the Riemann solution with the initial conditions
\begin{subequations} \label{Example1}
\begin{eqnarray}
u(x,0)=\left\{\begin{array}{ll}
0 & x<0 \\
0 & x\geq 0
\end{array}\right.\\
a(x,0)=\left\{\begin{array}{ll}
2 & x<0 \\
1 & x\geq 0
\end{array}\right.
\end{eqnarray}
\end{subequations}
This set of initial conditions will be referred to as Example 1a.
This is a standard shock tube problem where the velocity is zero
initially and there is a jump in pressure. The initial conditions
can be seen in Figure \ref{riemannproblemfigure1}. With these values
there are discontinuities in variables $v^+$ and $v^-$. These
discontinuities are seen to propagate to the right and left
respectively as seen in Figure \ref{riemannproblemfigure2}.  These
numerical simulations were conducted with $2^{10}$ grid points on
the domain $[-1,1]$ with $\alpha=0.02$.

\begin{figure}[!ht]
\begin{center}
\begin{minipage}{0.48\linewidth} \begin{center}
  \includegraphics[width=.9\linewidth]{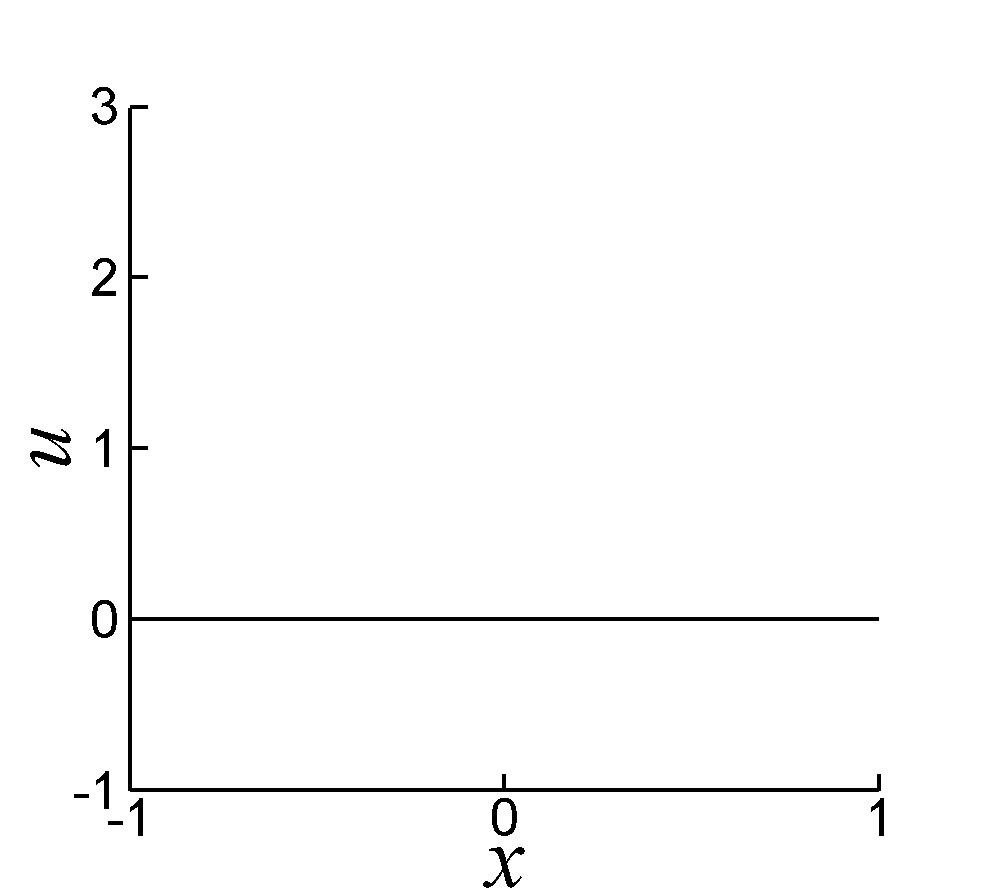}
\end{center} \end{minipage}
\begin{minipage}{0.48\linewidth} \begin{center}
  \includegraphics[width=.9\linewidth]{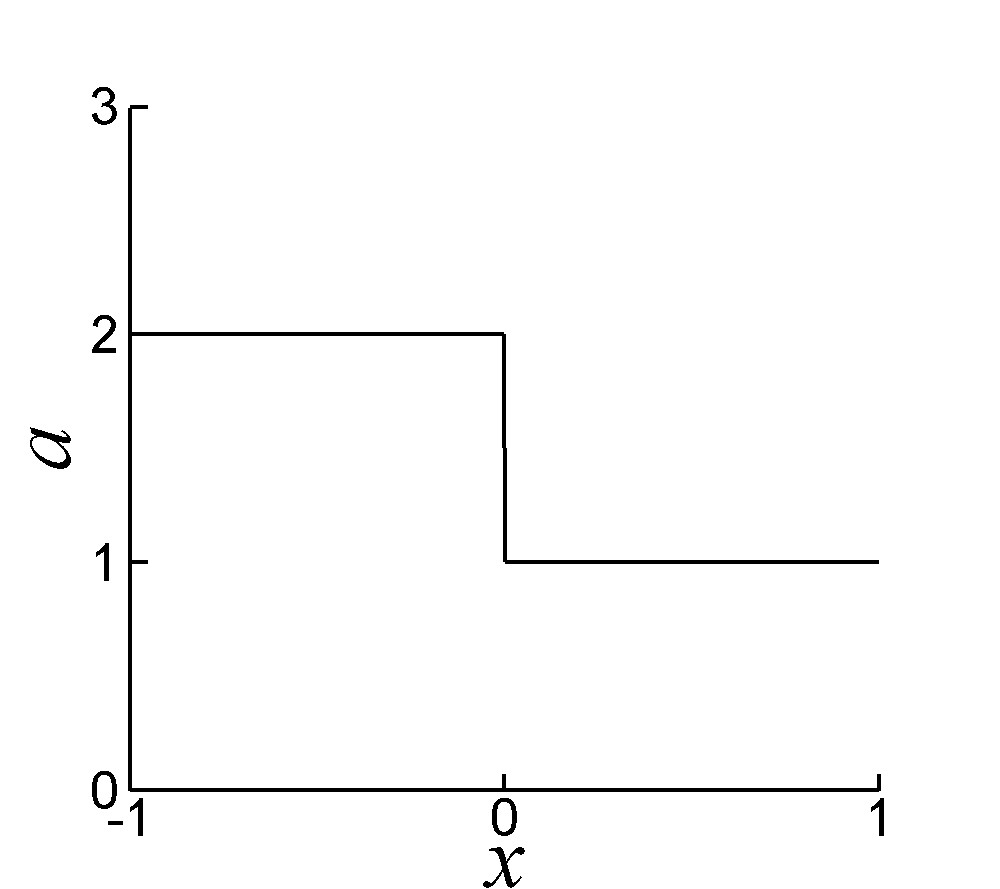}
\end{center} \end{minipage}\\
\begin{minipage}{0.48\linewidth}\begin{center} (a) \end{center} \end{minipage}
\begin{minipage}{0.48\linewidth}\begin{center} (b) \end{center}
\end{minipage}\vspace{-2mm}
\caption{The initial conditions for the Riemann problem. (a)  The
initial velocity profile is $u=0$. (b) The initial speed of sound
profile is a simple jump discontinuity, indicating a higher pressure
on the right. } \label{riemannproblemfigure1}
\end{center}
\end{figure}

\begin{figure}[!ht]
\begin{center}
\begin{minipage}{0.48\linewidth} \begin{center}
  \includegraphics[width=.9\linewidth]{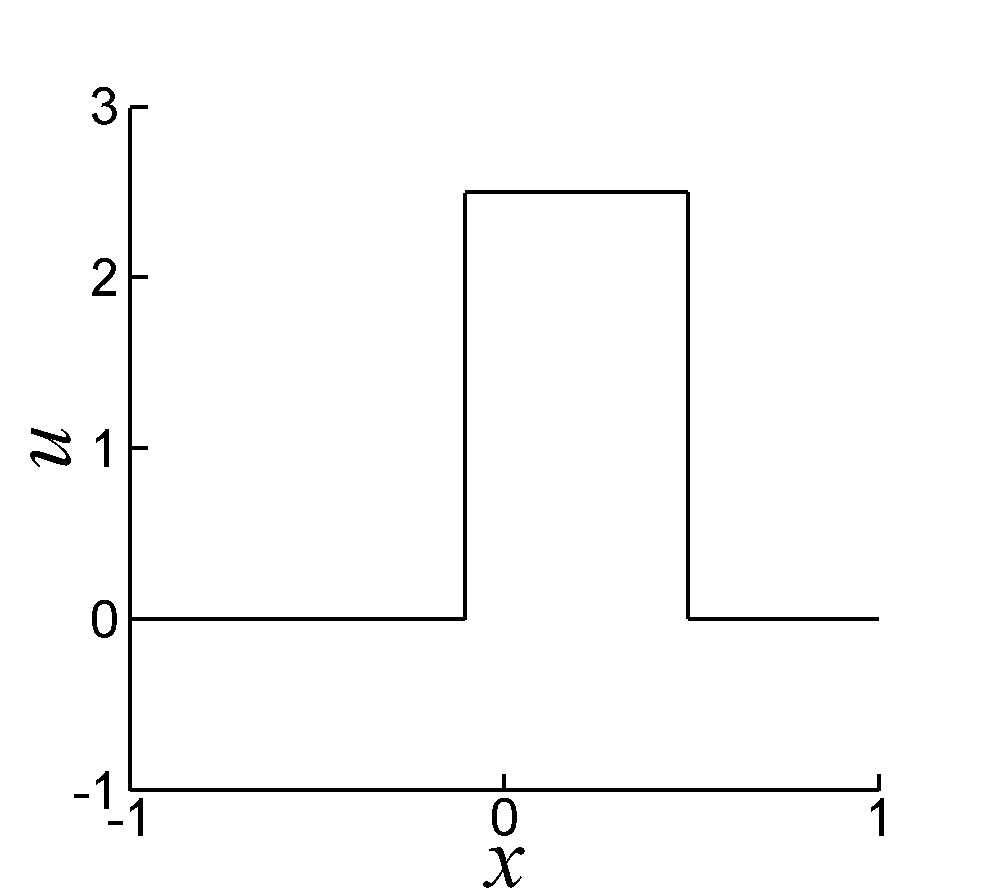}
\end{center} \end{minipage}
\begin{minipage}{0.48\linewidth} \begin{center}
  \includegraphics[width=.9\linewidth]{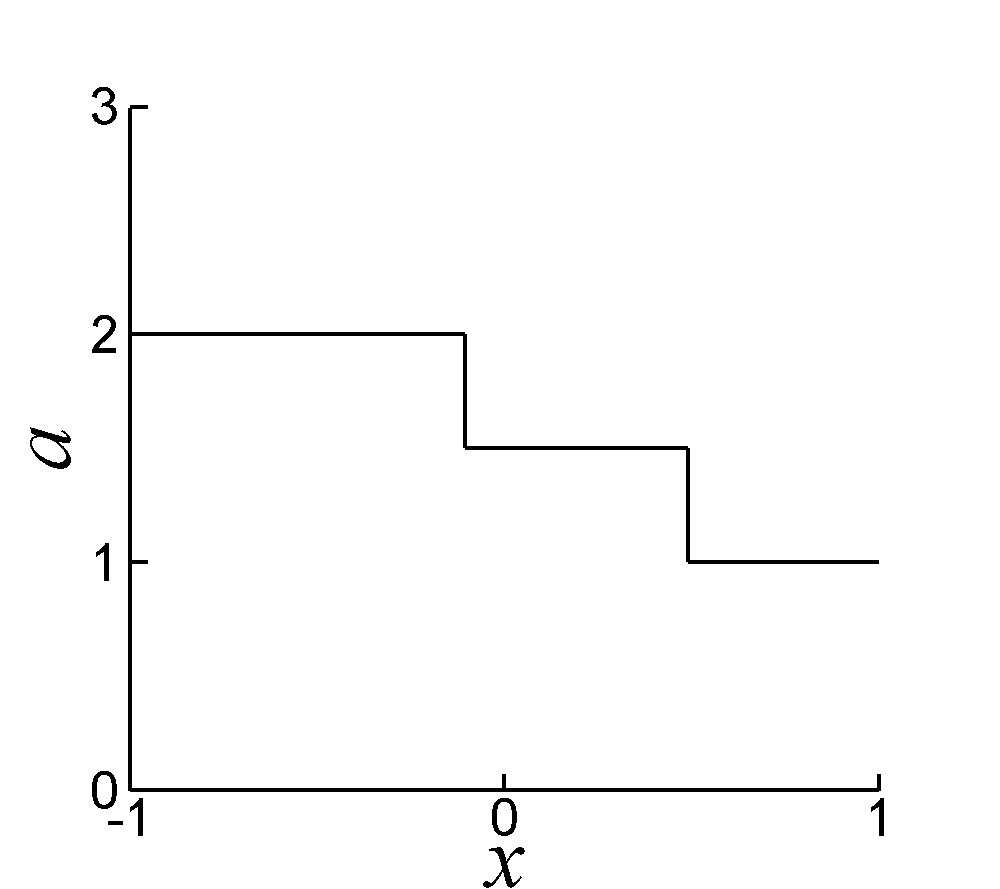}
\end{center} \end{minipage}\\
\begin{minipage}{0.48\linewidth}\begin{center} (a) \end{center} \end{minipage}
\begin{minipage}{0.48\linewidth}\begin{center} (b) \end{center}
\end{minipage}\vspace{-2mm}
\caption{The solution to the Riemann problem with initial conditions
Equation (\ref{Example1}).   There are two shocks found in both the
velocity and speed of sound.  This is the solution at time $t=0.2.$
} \label{riemannproblemfigure2}
\end{center}
\end{figure}

\subsection{The perturbed Riemann problem}\label{peturbedRiemannSection}
While the solution to the
Riemann problem generally leads to one or two traveling
discontinuities, we have found that some of these traveling
discontinuities are unstable.

There have been examinations of the Burgers equation with the
characteristics averaged \cite{Mohseni:06l, Norgard:08b,
NorgardG:08a, BhatHS:06a, BhatHS:08a}.  In two of these papers it
was seen that certain traveling discontinuities in these equations
were unstable and if perturbed would become expansion waves
\cite{NorgardG:08a, BhatHS:08a}. We have found similar behavior for
the CAHE equations.

Consider Example 1a as seen in Figure \ref{riemannproblemfigure2}.
The position of the leftmost shock is denoted $x^-$.  The value of
$u-a$ is greater to the right of $x^-$ than it is to the left.  So
clearly $\ubar - \abar$ is increasing across the discontinuity at
$x^-$.  Thus the $v^-$ characteristics are diverging there.  This is
typically indicative of an expansion wave.  However, the spreading
of this discontinuity remains simply a discontinuity.

%\todo{possibly a figure here to demonstrate this.}

 Suppose instead that instead of a strict discontinuity, the function was a
continuous function, albeit extremely steep.  The diverging
characteristics would then spread this extremely steep gradient and
gradually make it less steep, forming an expansion wave.  Thus, this
traveling discontinuity is unstable, and any amount of smoothing
would lead to an expansion wave as opposed to a shock.

Using numerics discussed in section \ref{Numericssection}, we made
runs using the same initial conditions as in Example 1a except the
initial conditions were initially ``smoothed'', using the same
Helmholtz filter as employed in the averaging of the
characteristics.  We will refer to this as Example 1b.  This
smoothing of the initial conditions is similar to the approach used
in \cite{NorgardG:08a}. Figure \ref{riemannproblemfigure3} shows the
initial conditions where the smoothing is readily apparent in the
$a$ variable. Figure \ref{riemannproblemfigure4} shows the solution
at time $t=0.2$.  The right shock remains unchanged from the
unperturbed solution seen in Figure \ref{riemannproblemfigure2}.
However, an expansion wave is clearly seen on the left in the
perturbed solution where previously there was a discontinuity. These
simulations were run with a resolution of $2^{10}$ grid points and
$\alpha=0.02$.

\begin{figure}[!ht]
\begin{center}
\begin{minipage}{0.48\linewidth} \begin{center}
  \includegraphics[width=.9\linewidth]{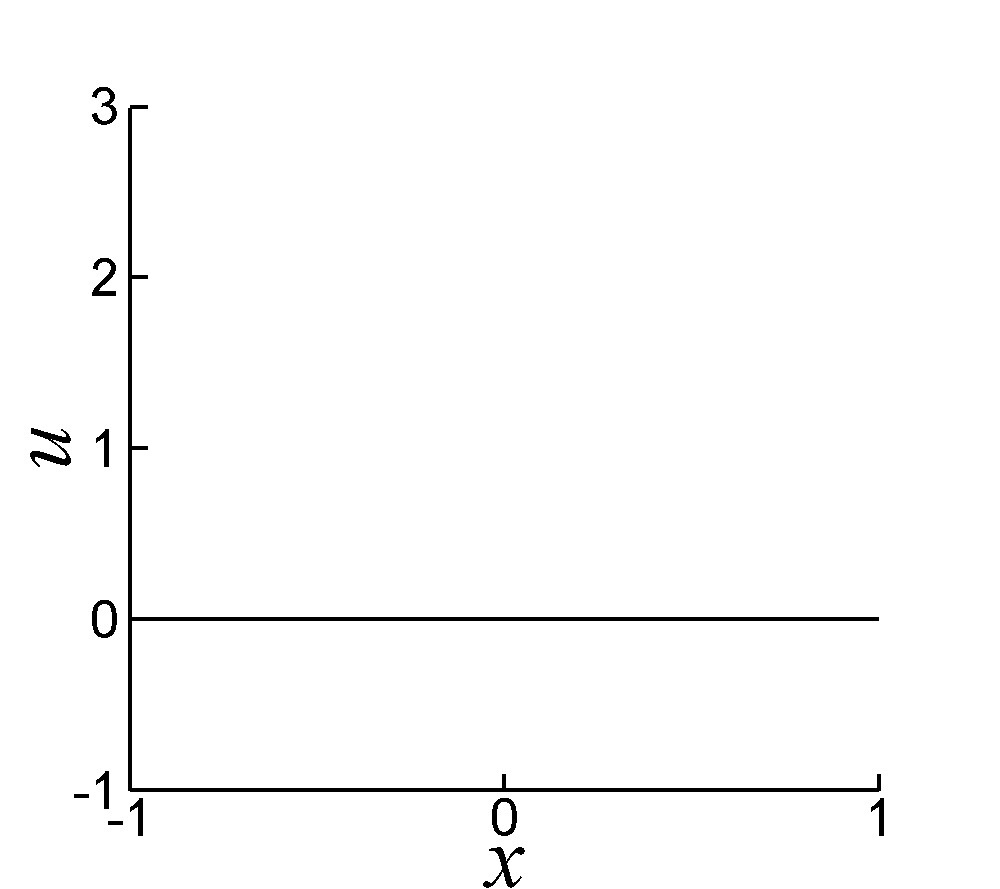}
\end{center} \end{minipage}
\begin{minipage}{0.48\linewidth} \begin{center}
  \includegraphics[width=.9\linewidth]{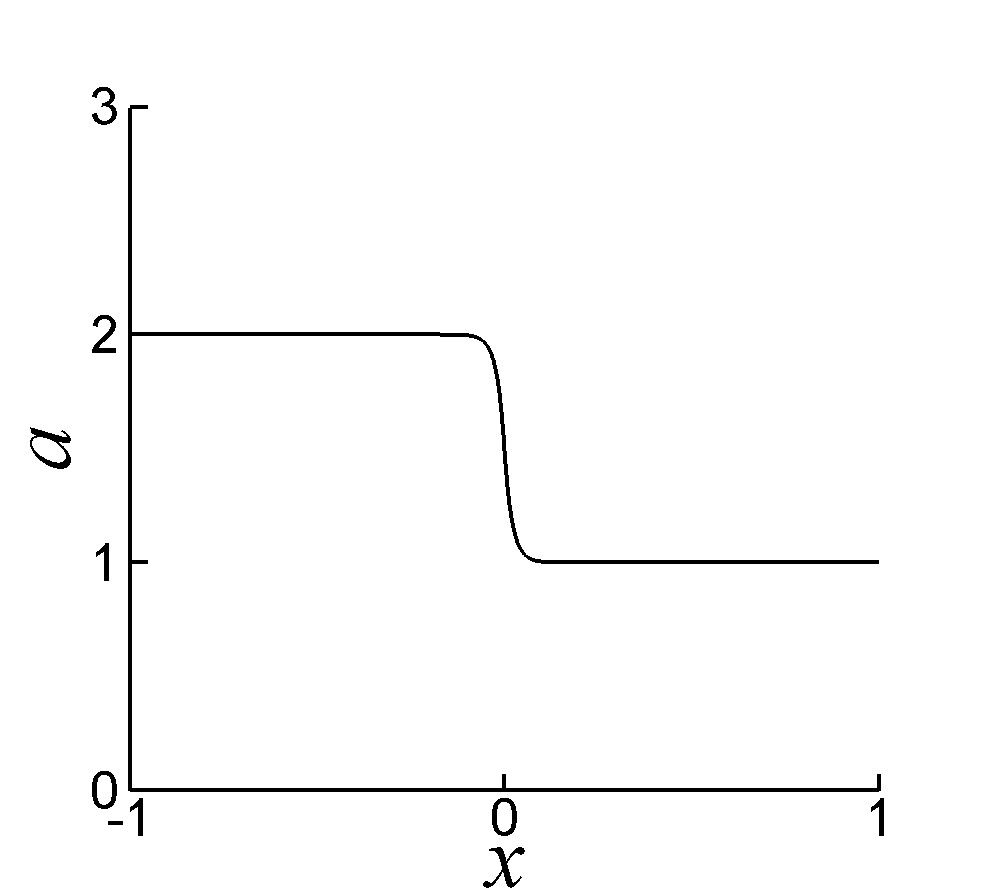}
\end{center} \end{minipage}\\
\begin{minipage}{0.48\linewidth}\begin{center} (a) \end{center} \end{minipage}
\begin{minipage}{0.48\linewidth}\begin{center} (b) \end{center}
\end{minipage}\vspace{-2mm}
\caption{The initial conditions for the perturbed Riemann problem.
(a) The initial velocity profile is $u=0$. (b) The initial speed of
sound profile is a smoothed pressure jump. }
\label{riemannproblemfigure3}
\end{center}
\end{figure}

\begin{figure}[!ht]
\begin{center}
\begin{minipage}{0.48\linewidth} \begin{center}
  \includegraphics[width=.9\linewidth]{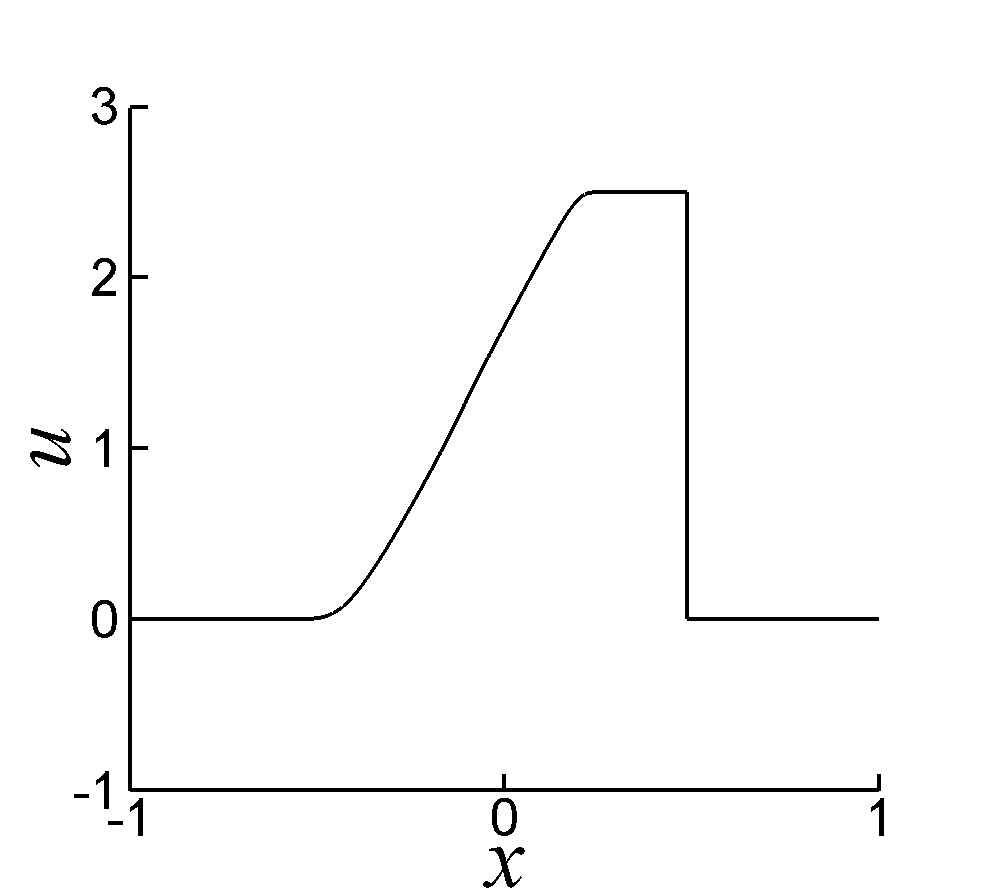}
\end{center} \end{minipage}
\begin{minipage}{0.48\linewidth} \begin{center}
  \includegraphics[width=.9\linewidth]{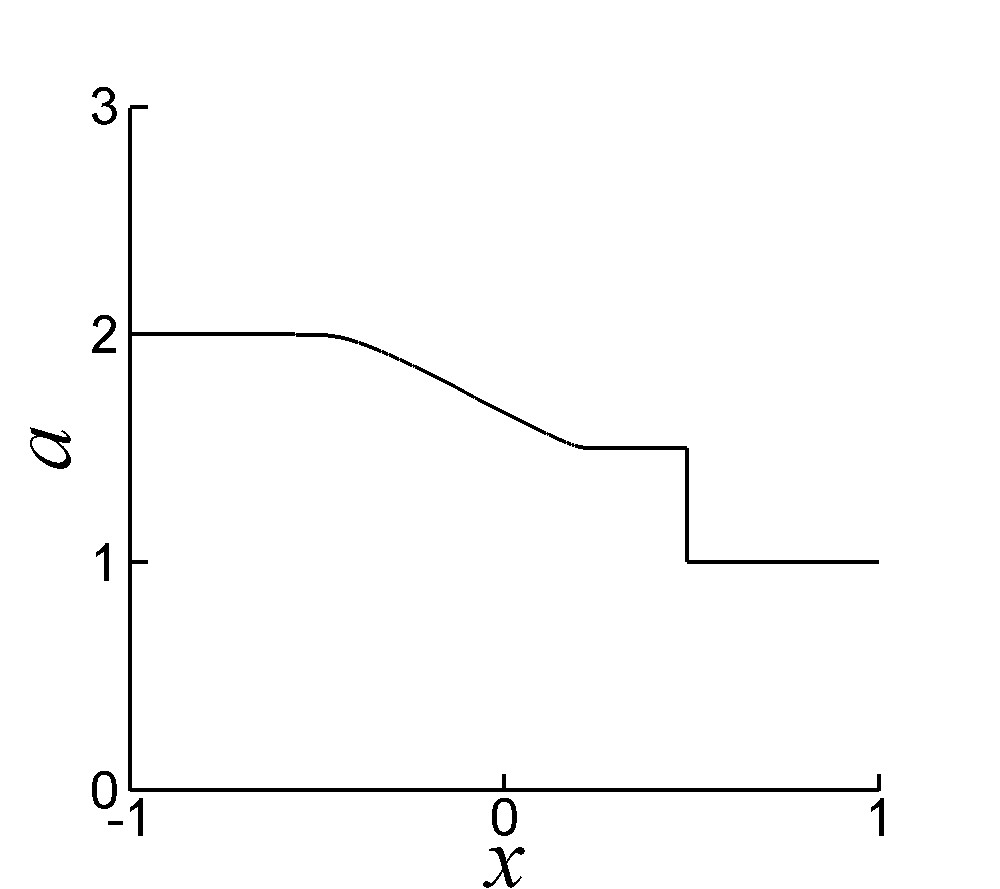}
\end{center} \end{minipage}\\
\begin{minipage}{0.48\linewidth}\begin{center} (a) \end{center} \end{minipage}
\begin{minipage}{0.48\linewidth}\begin{center} (b) \end{center}
\end{minipage}\vspace{-2mm}
\caption{The solution to the perturbed Riemann problem.  The initial
conditions are Equation (\ref{Example1}) that have been filtered
with the Helmholtz filter. There is clearly a shock and an expansion
wave.  This is the solution at time $t=0.2.$ }
\label{riemannproblemfigure4}
\end{center}
\end{figure}

To summarize, the solutions to the Riemann problem take the form of
Equations \eref{reimannsolution} and produces one or two traveling
discontinuities, some of which may be unstable. If instead the
discontinuities in the initial conditions are replace with steep,
but continuous gradients, then the results will begin to include
expansion waves in addition to the shocks.

\section{Numerics} \label{Numericssection}
In section \ref{RiemannNumericsSection} a numerical method is
presented that will solve the Riemann problem.  Here a numerical
method is established that can be applied to more general initial
conditions.  In section \ref{Comparisonsection} we compare the
behavior of the CAHE equations to the homentropic Euler equations.
To accomplish this we use a numerical technique based on the
characteristic structure of the CAHE equations. This technique is
similar in nature to that used by Bhat and Fetecau
\cite{BhatHS:08a}. Since the equations we are investigating are
inspired by the use of averaged characteristics, our numerical
method will track the characteristics as they evolve in time.

Our numerical method is interested in solving Equations \eref{CAHE}
with initial conditions
\begin{eqnarray}
v^+(x,0)=v^+_0(x)\\
v^-(x,0)=v^-_0(x).
\end{eqnarray}
Since we cannot operate on an infinite domain we restrict the
problem to the domain $[a,b]$ and assume that on the boundaries
\begin{eqnarray}
\frac{\p}{\p x} v^+(x,t)=0\\
\frac{\p}{\p x} v^-(x,t)=0.
\end{eqnarray}

From Equations (\ref{positivecharacteristic}) and
(\ref{negativecharacteristic}) we know that the value of $v^\pm$
does not change along its characteristics, thus
\begin{equation}
\label{constantalongcharacteristic}
v^\pm(\phi^\pm(X))=v^\pm_0(X),
\end{equation}
where $X$ is the label of the characteristic.

Let $\xb$ be the equally spaced grid points on the interval
$[-a,a]$.  Let $\textbf{X}^\pm$ be the positions of the
characteristics $X_i$.  These vector can also be considered as the
values of $\phi$ or as a dynamic grid.  At time $t=0$,
$\xb=\textbf{X}^+=\textbf{X}^-$.  Notationally let $v^\pm(\xb)$ be a
vector containing the values of $v^\pm$ evaluated at points $\xb$
and let $v^\pm(\textbf{X}^\pm)$ be $v^\pm$ evaluated at the
locations of the characteristics $\textbf{X}^\pm$.

The positions of the characteristics are iterated in time, thus
giving us the solution.  A typical iteration proceeds as follows.
\begin{itemize}
\item[1.] The values of $v^\pm(\textbf{X}^\pm)$ are known and remain constant through time as a result of Equation \eref{constantalongcharacteristic}. Using cubic splines these values are interpolated to $v^\pm(\xb)$.
\item[2.] From $v^\pm(\xb)$, it is a simple task of finding the values of $u$ and $a$ on the grid $\xb$.  The Helmholtz operator is then numerically inverted to give the values of $\ubar$ and $\abar$ on $\xb$.  These values are then interpolated back to give the values of $\ubar \pm \abar$ on $\textbf{X}^\pm$.
\item[3.] The positions of the characteristics, $\textbf{X}^\pm$, are then iterated in time by the values of $\ubar \pm \abar$ on $\textbf{X}^\pm$ calculated in the previous step.
\end{itemize}

In step 2, the Helmholtz operator is numerically inverted.  To give
a brief explanation of how the operator is inverted first consider a
finite difference approximation of a second derivative on a
equispaced grid.
\begin{equation}
\frac{\p^2}{\p x^2} f(x(i))=\frac{1}{12 \Delta
x}\left[-1f(x_{i-2})+16f(x_{i-1})-30f(x_i)+16f(x_{i+1})-1f(x_{i+2})
\right]
\end{equation}
Thus the Helmholtz operator $(1-\alpha^2 \frac{\p^2}{\p x^2})$ can
be represented as a quintdiagonal matrix.  This is easily inverted
allowing the calculation of $\ubar\pm\abar$ from $u \pm a$ on an
equispaced grid.

A benefit to this method is that plotting the vectors
$\textbf{X}^\pm$ versus time gives a graph of the characteristic
plane.

\section{Comparisons with homentropic Euler equations}\label{Comparisonsection}
The CAHE equations were based on the homentropic Euler equations so
naturally we  compare the behavior of the two.  In particular this
section examines two examples chosen specifically to compare and
contrast the equations behavior. Example 2 was chosen so that the
CAHE equations would have a single traveling shock.  The initial
conditions for Example 2 are
\begin{subequations}\label{Example2}
\begin{eqnarray}
u(x,0)=\left\{\begin{array}{ll}
5 & x<0 \\
0 & x\geq 0
\end{array}\right.\\
a(x,0)=\left\{\begin{array}{ll}
2 & x<0 \\
1 & x\geq 0
\end{array}\right.
\end{eqnarray}
\end{subequations}
Example 3 was chosen so that the homentropic Euler equations would
have a single traveling shock.  The initial conditions for Example 3
are
\begin{subequations} \label{Example3}
\begin{eqnarray}
u(x,0)=\left\{\begin{array}{ll}
0 & x<0 \\
-9.37440483... & x\geq 0
\end{array}\right.\\
a(x,0)=\left\{\begin{array}{ll}
2 & x<0 \\
1 & x\geq 0
\end{array}\right.
\end{eqnarray}
\end{subequations}
 Examples 2a and 3a are the simulation of the
homentropic Euler equations and 2b and 3b address the CAHE
equations. In both examples the two sets of equations are found to
behave significantly differently. This is attributed to the the fact
that the CAHE equations will not produce new characteristics as
shown by the proof of Theorem \ref{existencetheorem}, while the
homentropic Euler equations do produce new characteristics.

\subsection{Numerics for the homentropic Euler equations}
Numerical simulations of the homentropic Euler equations require a
separate method than that described in section \ref{Numericssection}
because characteristics in the homentropic Euler equations collide
and are created. For the numerical simulations of the homentropic
Euler equations the Richtmyer method, a well-established if low
order method, was utilized as described by \cite{LaneyCB:98a}. This
method is second order finite difference scheme and employs an
artificial viscosity. Clearly this is not the most optimal numerical
scheme for the Euler equations, but run at a sufficiently high
resolution it will suit our purposes. For reference there were
$2^{12}$ grid points on a $[-1, 1]$ domain.  This method requires an
artificial viscosity for stability when examining the Riemann
problem.  Several different values of $\nu$ were tested to see that
the value did not significantly affect the solutions on the time
interval examined. For the numerical simulations shown here, the
artificial viscosity was set at $\nu=0.08$.

\subsection{Example 2}
Example 2 \eref{Example2} was chosen so that the CAHE equations
would have a single traveling shock. Notice that with these initial
conditions there is a discontinuity in the variable $v^+$, but $v^-$
is constant.  Since $v^-$ is constant in the beginning it should
remain constant for all time. This is a consequence of the fact that
no characteristics are created or destroyed as time progresses.

\subsubsection{Example 2b, the CAHE equations} Figure \ref{example2b} shows the
simulation for Example 2b which examine the CAHE equations. The
simulation was conducted with a resolution of $2^{12}$ and
$\alpha=0.02$.  Figures \ref{example2b}a and \ref{example2b}b show
the single discontinuity progressing to the right as expected.
Figures \ref{example2b}c and \ref{example2b}d show the values of
$v^+$ and $v^-$.   The discontinuity in $v^+$ has traveled to the
right and as expected $v^-$ remains constant.

Figures \ref{example2b}e and \ref{example2b}f show the paths of the
$v^+$ and $v^-$ characteristics respectively.  Figure
\ref{example2b}e shows that the characteristics of $v^+$ are moving
towards each other and as they near the shock are bent toward each
other.  They do not, however, intersect. Figure \ref{example2b}f
shows that the characteristics of $v^-$ pass through the shock and
upon doing so change their speed to match the characteristics on the
other side.

\begin{figure}[!ht]
\begin{center}
\begin{minipage}{0.38\linewidth} \begin{center}
  \includegraphics[width=.9\linewidth]{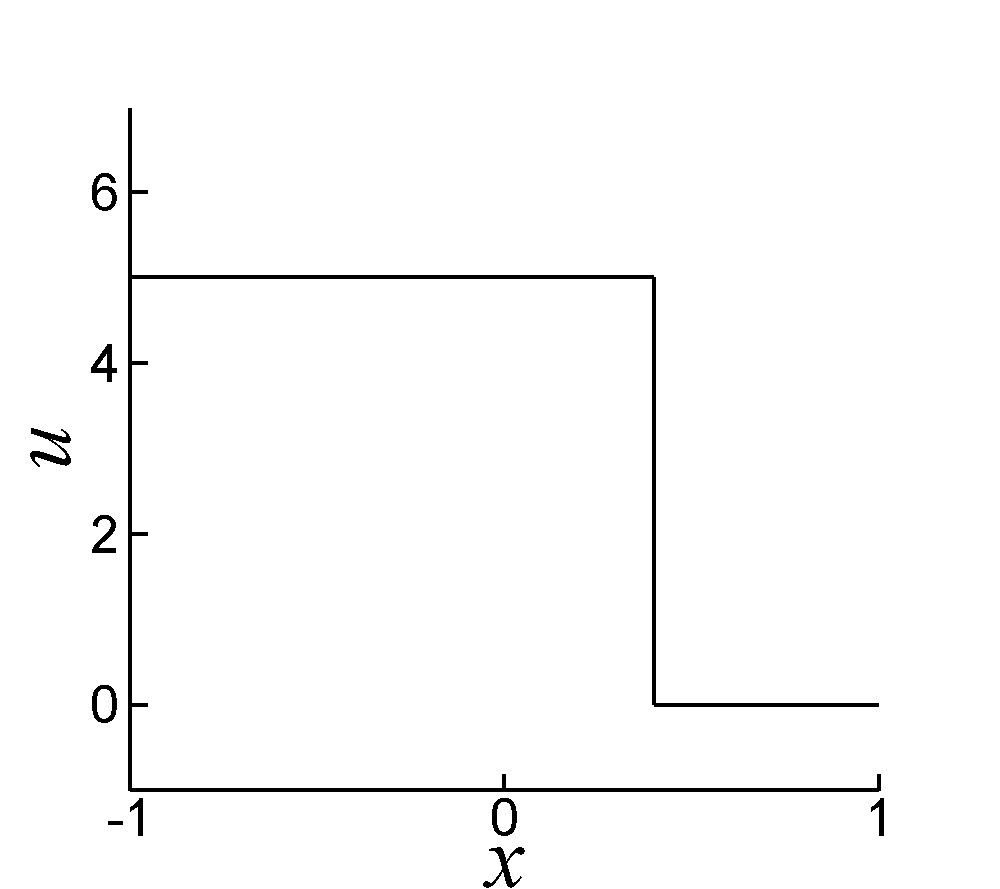}
\end{center} \end{minipage}
\begin{minipage}{0.38\linewidth} \begin{center}
  \includegraphics[width=.9\linewidth]{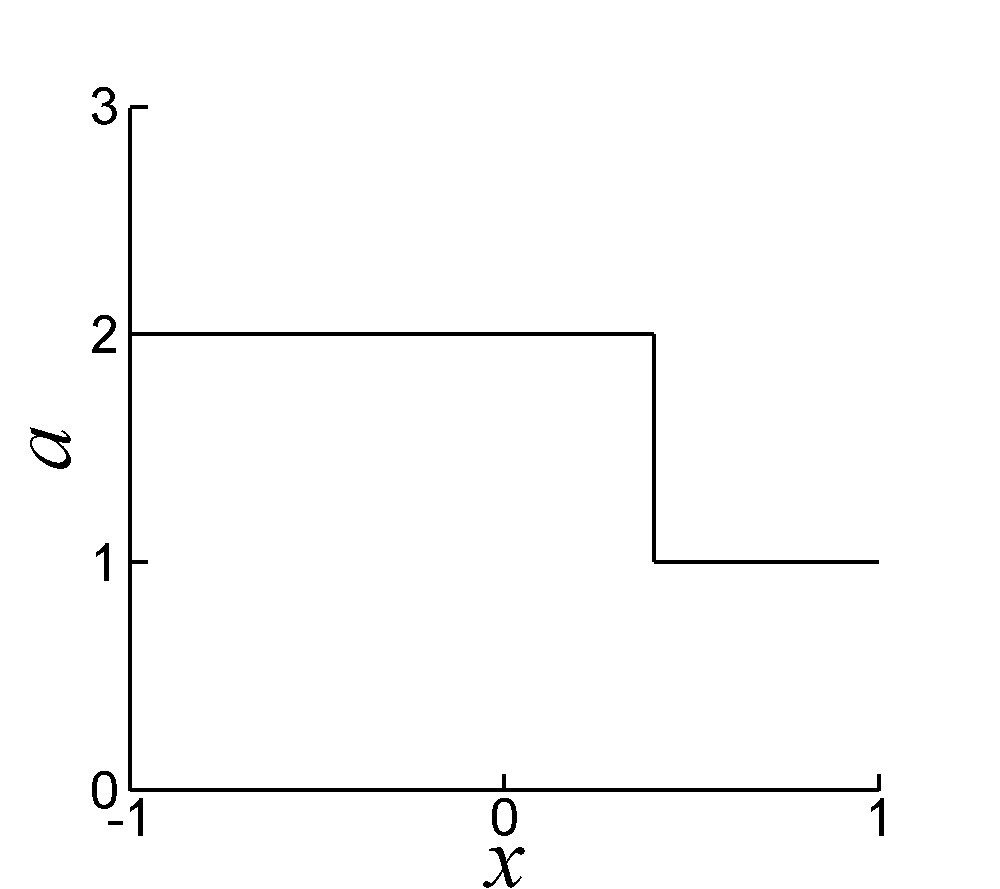}
\end{center} \end{minipage}\\ \vspace{2mm}
\begin{minipage}{0.38\linewidth}\begin{center} (a) \end{center} \end{minipage}
\begin{minipage}{0.38\linewidth}\begin{center} (b) \end{center}
\end{minipage}\\
\begin{minipage}{0.38\linewidth} \begin{center}
  \includegraphics[width=.9\linewidth]{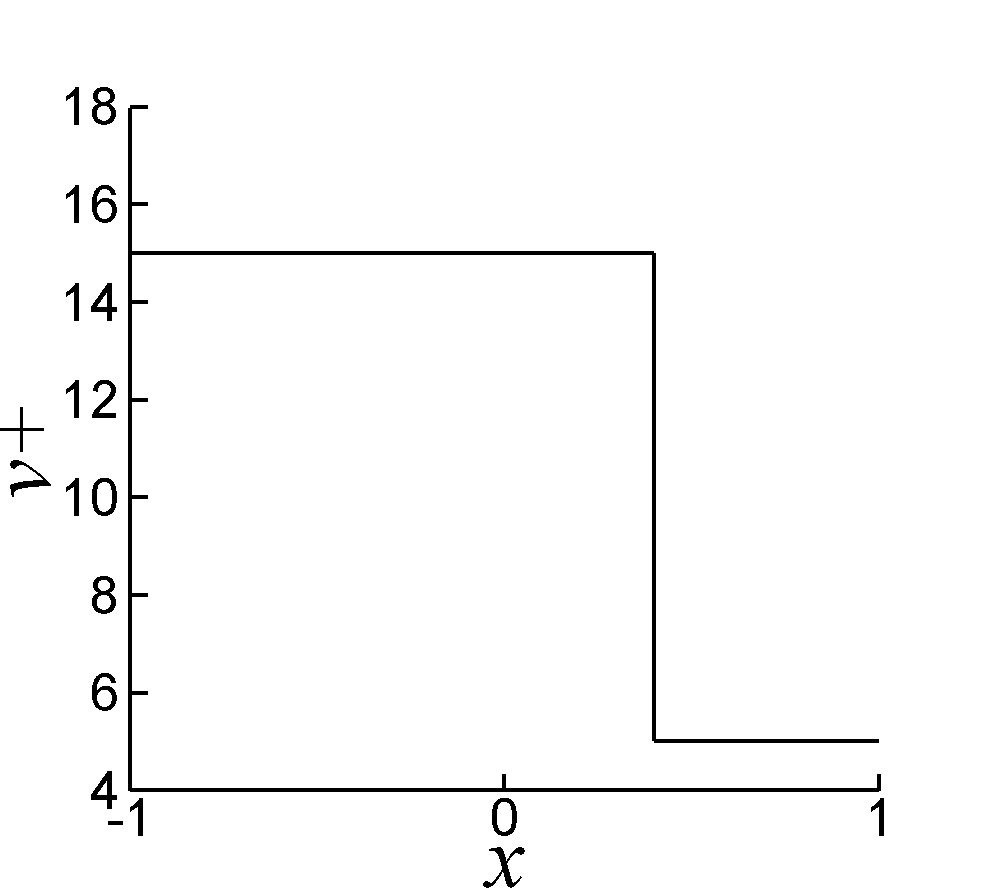}
\end{center} \end{minipage}
\begin{minipage}{0.38\linewidth} \begin{center}
  \includegraphics[width=.9\linewidth]{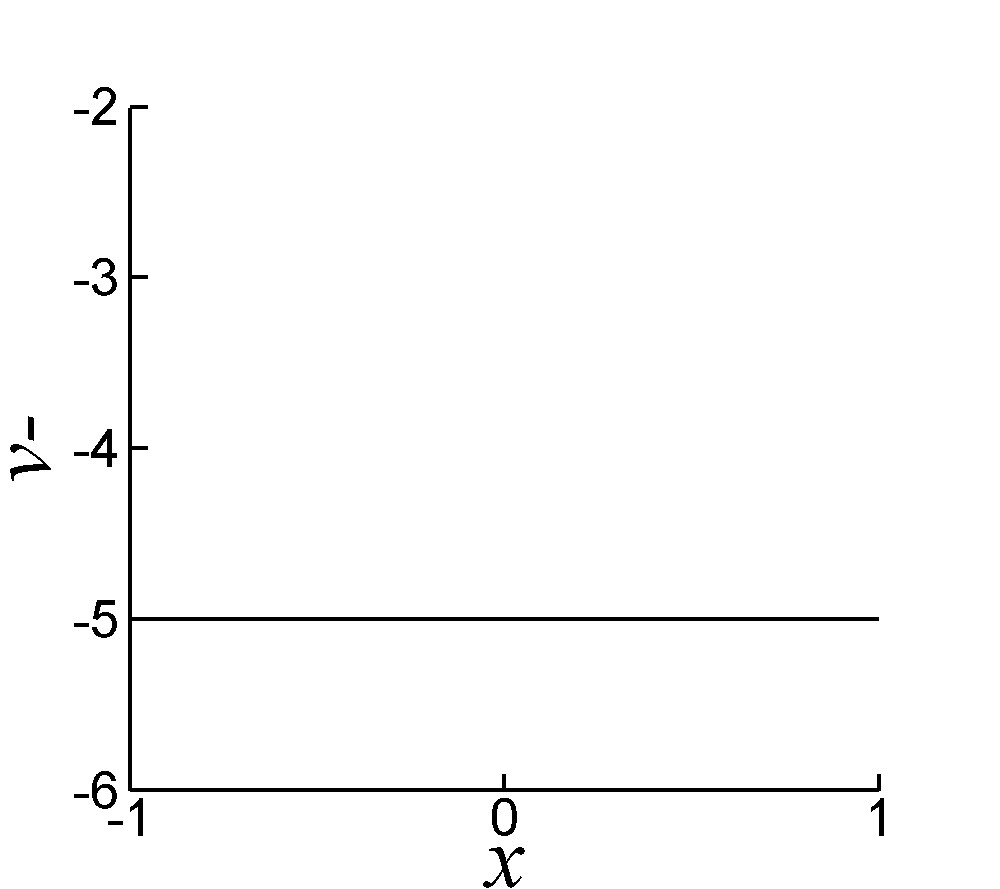}
\end{center} \end{minipage}\\ \vspace{2mm}
\begin{minipage}{0.38\linewidth}\begin{center} (c) \end{center} \end{minipage}
\begin{minipage}{0.38\linewidth}\begin{center} (d) \end{center}
\end{minipage}\\
\begin{minipage}{0.38\linewidth} \begin{center}
  \includegraphics[width=.9\linewidth]{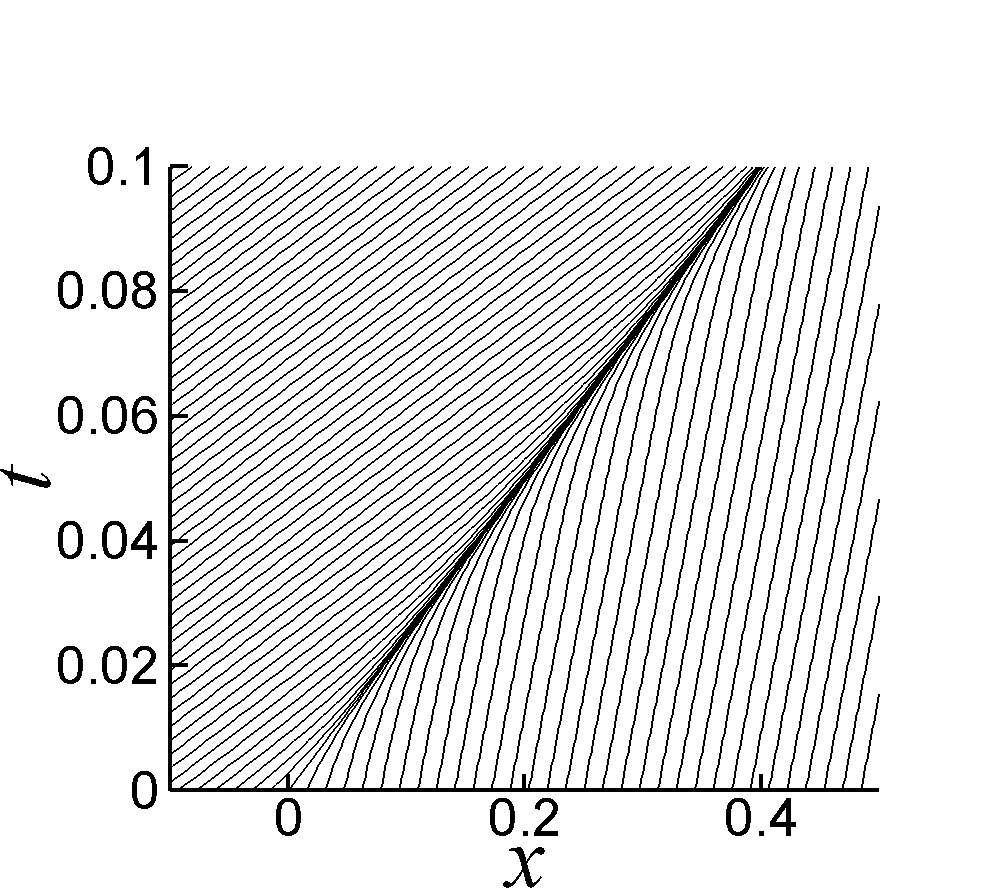}
\end{center} \end{minipage}
\begin{minipage}{0.38\linewidth} \begin{center}
  \includegraphics[width=.9\linewidth]{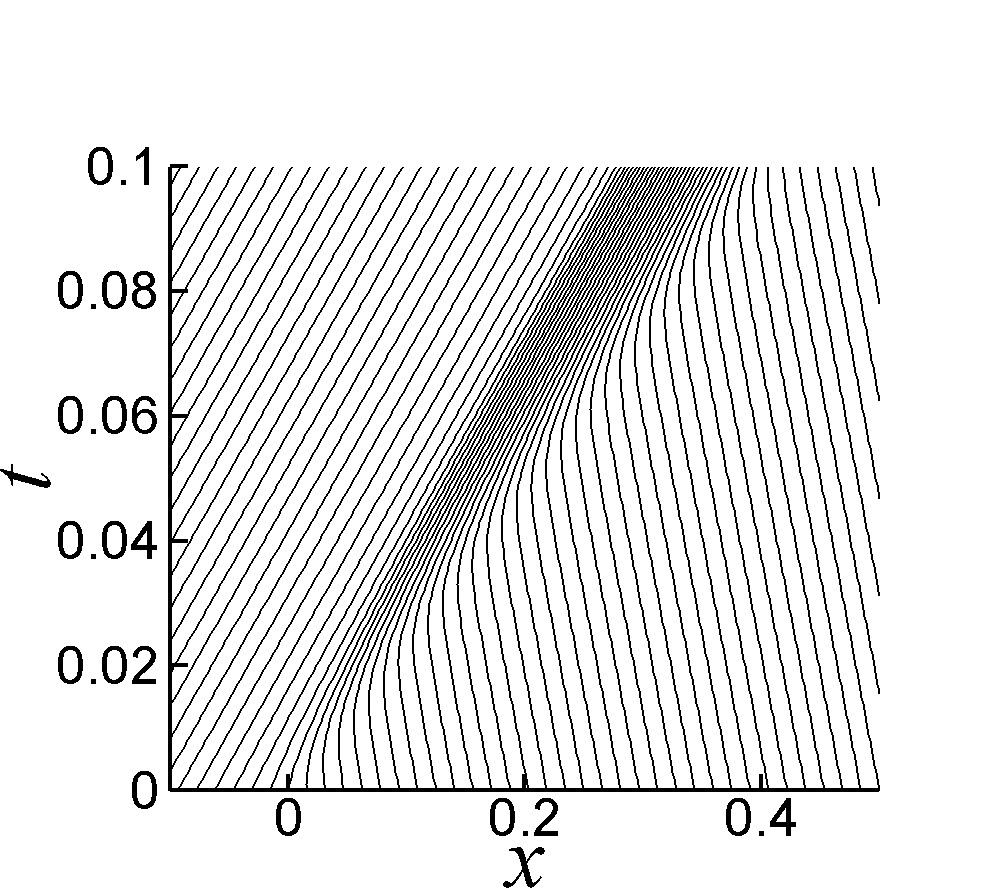}
\end{center} \end{minipage}\\ \vspace{2mm}
\begin{minipage}{0.38\linewidth}\begin{center} (e) \end{center} \end{minipage}
\begin{minipage}{0.38\linewidth}\begin{center} (f) \end{center}
\end{minipage}\vspace{-2mm}
\caption{Example 2b at time $t=0.1$.  This example was chosen so
that the CAHE equations would have a single traveling shock which is
clearly visible in the velocity and speed of sound in (a) and (b).
(c) and (d) show graphs of the Riemann invariants $v^+$ and $v^-$.
Note that $v^-$ is constant as it was in the initial conditions. (e)
and (f) show the $v^+$ and $v^-$ characteristics respectively.
Notice that the $v^+$ characteristics are converging to the shock
but never intersect, while the $v^-$ characteristics pass through
the shock and change speeds as they do so. } \label{example2b}
\end{center}
\end{figure}

\subsubsection{Example 2a, the homentropic Euler equations}
Now we examine the behavior of homentropic Euler equations for the
same initial conditions.  Figure \ref{example2a} shows the
simulation for Example 2a.  The simulation was conducted with a
resolution of $2^{12}$.

Figures \ref{example2a}a and \ref{example2a}b show a shock
progressing to the right as expected. In addition there is an
expansion wave also occurring. Figures \ref{example2a}c and
\ref{example2a}d show the values of $v^+$ and $v^-$.  Of primary
notice is that $v^-$ is no longer a constant but has attained new
value.  This is a significant departure from the behavior of the
CAHE equations. Figures \ref{example2a}e and \ref{example2a}f are
not graphs of the actual simulation.  The Richtmyer method does not
lend itself to characteristic graphs as does the method used for the
CAHE equations. Instead they are sketches that depict the behavior
of the giving simulation.

The characteristics in $v^+$ change speed as they travel through the
expansion wave.  The characteristics intersect causing the shock.
Thus at the shock the characteristics are destroyed.

For the $v^-$ characteristics, at time $t=0$, the characteristics to
the left of the discontinuity have a speed greater than those to the
right, but both are less than the speed of the shock. The result is
that the characteristics on the right will be absorb into the shock
and between the characteristics on the left and the shock there will
be a gap devoid of characteristics. This gap is filled with an
expansion wave created at $t=0$ and new characteristics that are
created at the location of the shock continually as time progresses.
The values of these new characteristics can be determined with the
Rankine-Hugoniot jump conditions.  This creation of characteristics
again is a behavior that the CAHE equations does not demonstrate.

\begin{figure}[!ht]
\begin{center}
\begin{minipage}{0.38\linewidth} \begin{center}
  \includegraphics[width=.9\linewidth]{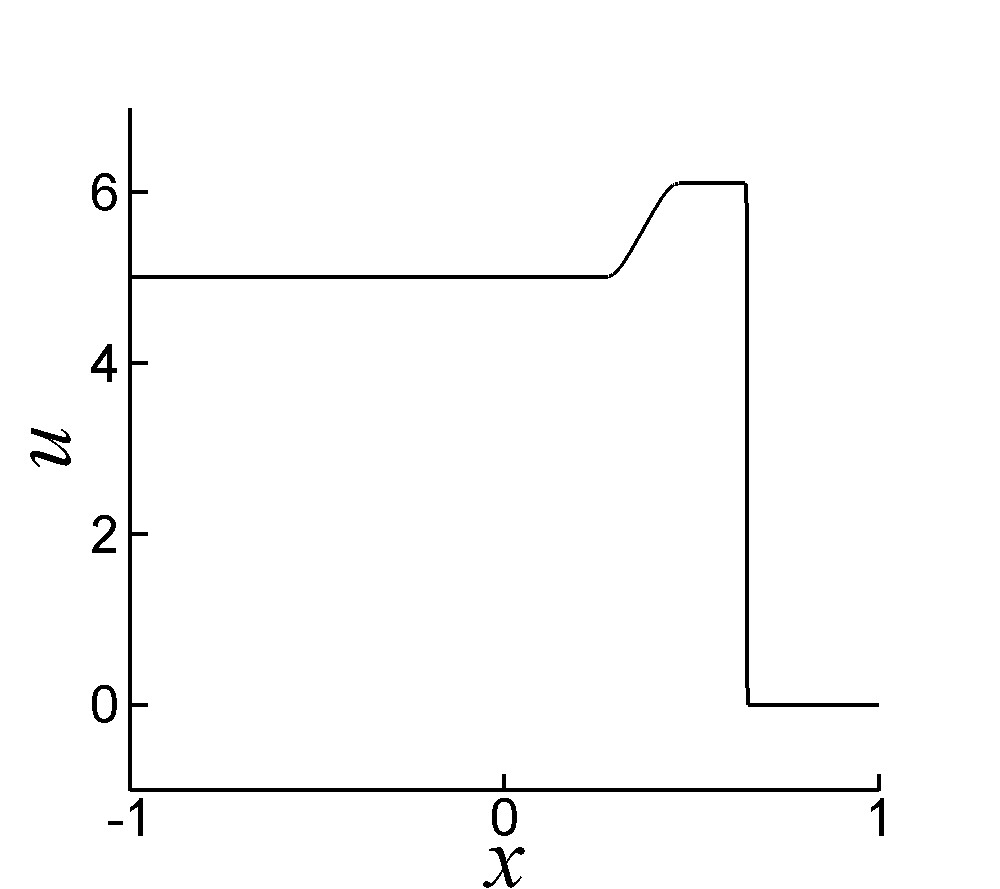}
\end{center} \end{minipage}
\begin{minipage}{0.38\linewidth} \begin{center}
  \includegraphics[width=.9\linewidth]{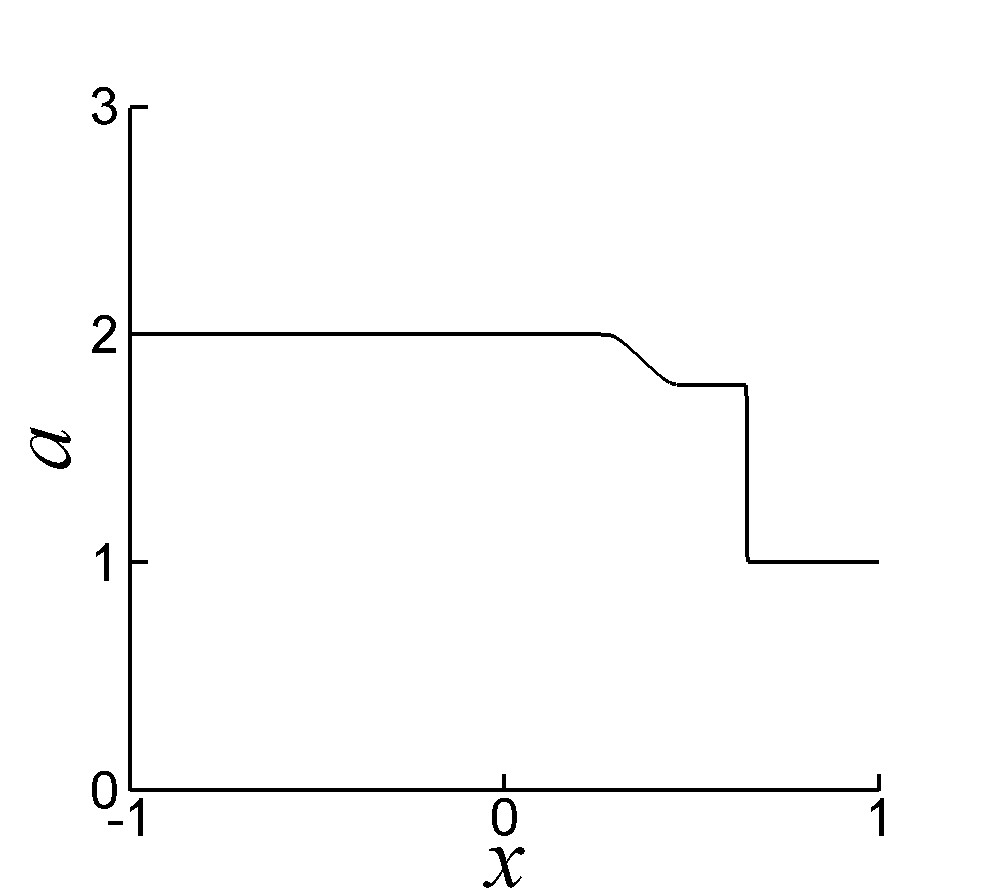}
\end{center} \end{minipage}\\ \vspace{2mm}
\begin{minipage}{0.38\linewidth}\begin{center} (a) \end{center} \end{minipage}
\begin{minipage}{0.38\linewidth}\begin{center} (b) \end{center}
\end{minipage}\\
\begin{minipage}{0.38\linewidth} \begin{center}
  \includegraphics[width=.9\linewidth]{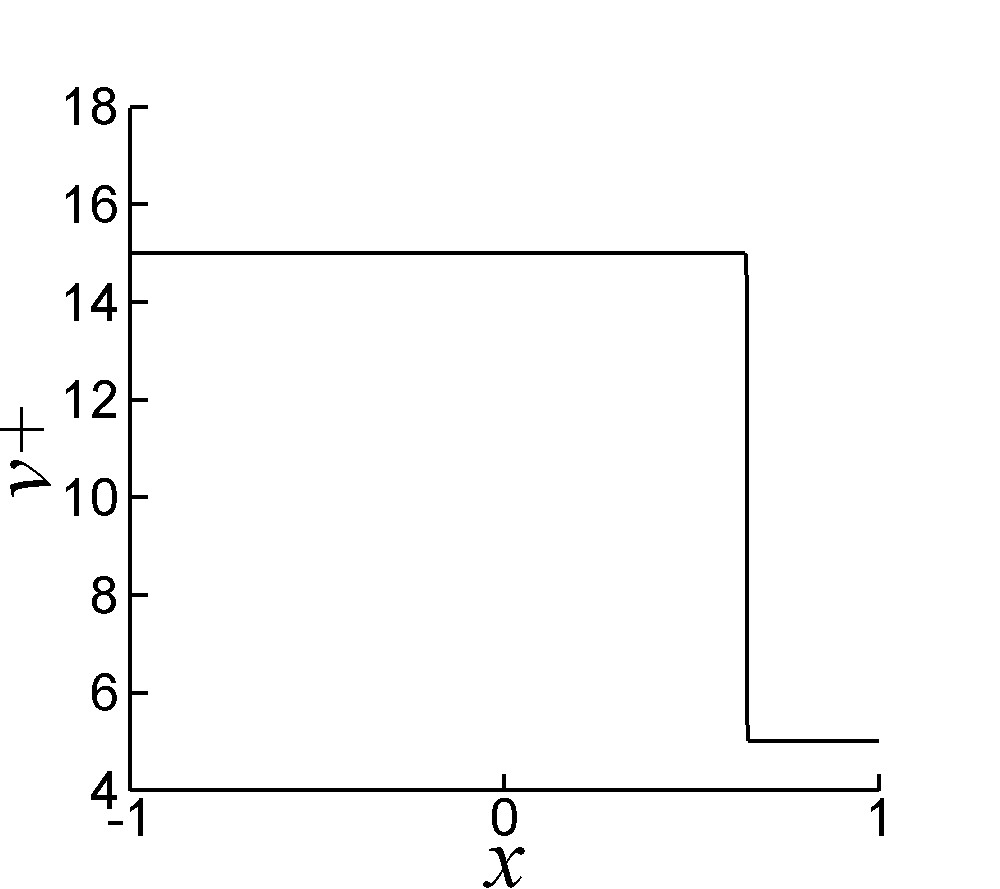}
\end{center} \end{minipage}
\begin{minipage}{0.38\linewidth} \begin{center}
  \includegraphics[width=.9\linewidth]{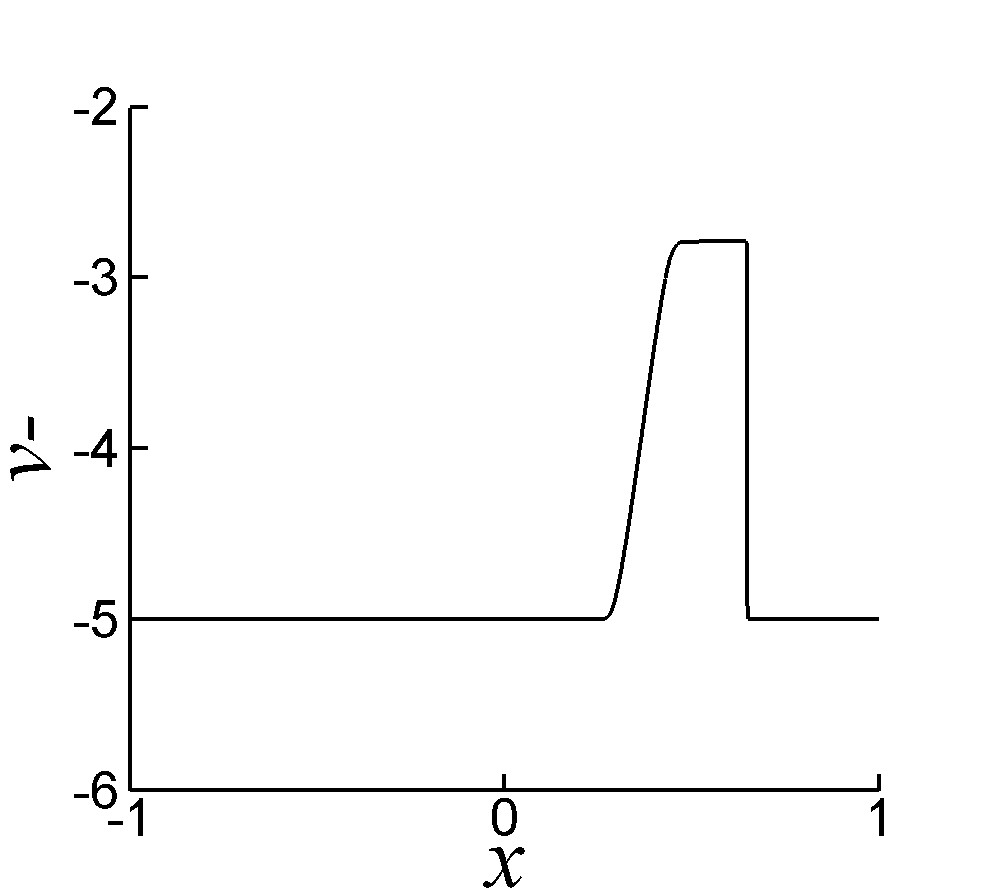}
\end{center} \end{minipage}\\ \vspace{2mm}
\begin{minipage}{0.38\linewidth}\begin{center} (c) \end{center} \end{minipage}
\begin{minipage}{0.38\linewidth}\begin{center} (d) \end{center}
\end{minipage}\\
\begin{minipage}{0.38\linewidth} \begin{center}
  \includegraphics[width=.9\linewidth]{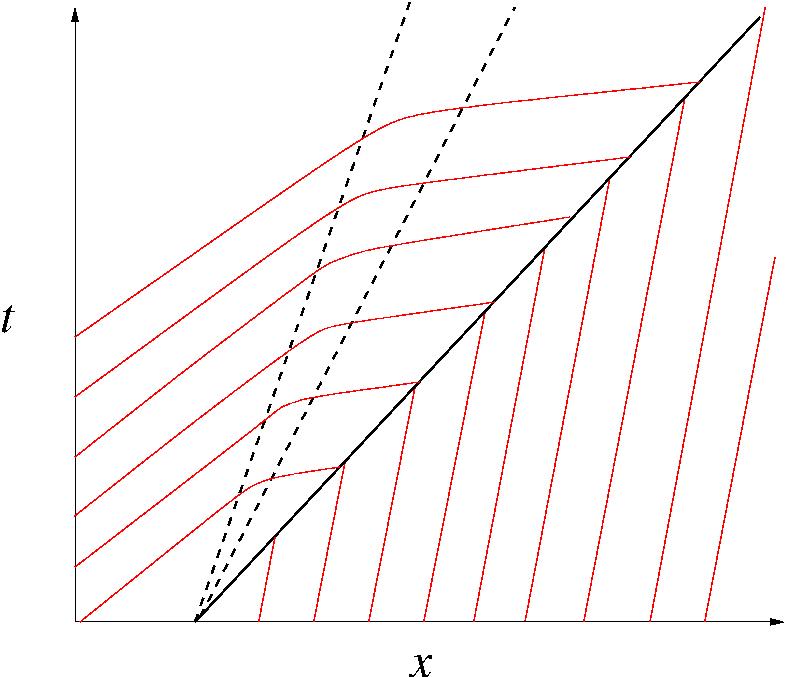}
\end{center} \end{minipage}
\begin{minipage}{0.38\linewidth} \begin{center}
  \includegraphics[width=.9\linewidth]{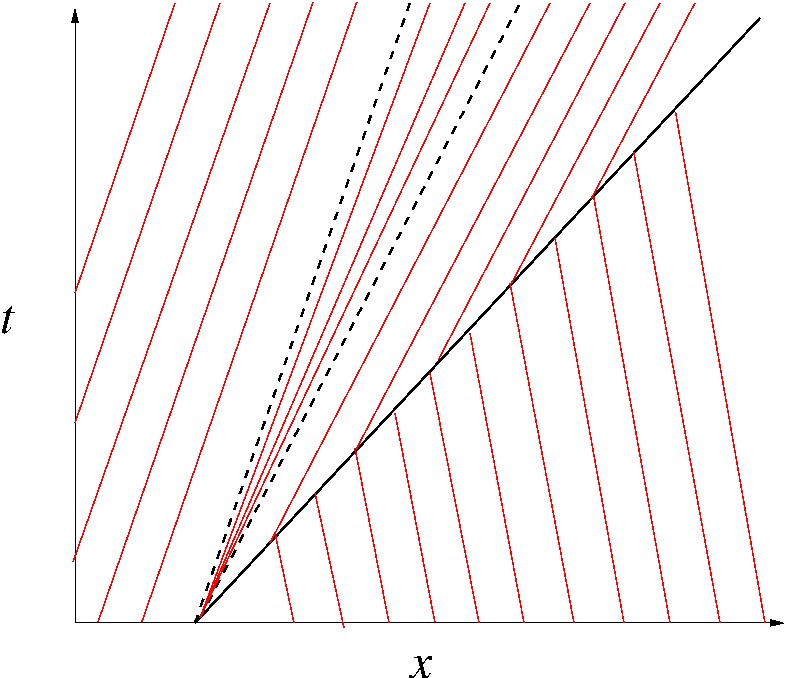}
\end{center} \end{minipage}\\ \vspace{2mm}
\begin{minipage}{0.38\linewidth}\begin{center} (e) \end{center} \end{minipage}
\begin{minipage}{0.38\linewidth}\begin{center} (f) \end{center}
\end{minipage}\vspace{-2mm}
\caption{Example 2a at time $t=0.1$.  While the CAHE equations form
a single shock the homentropic Euler equations clearly form a shock
and expansion wave.  (a) and (b) show the shocks and expansion wave
clearly in the velocity and speed of sound.  (c) and (d) show the
Riemann invariants $v^+$ and $v^-$.  Notice that while $v^-$ began
as a constant it is no longer.  This is due to the creation of new
characteristics at the shock.  (e) and (f) show sketches of the
characteristics for this example.  (e) shows that the $v^+$
characteristics are being absorbed by the shock.  (f) shows the
$v^-$ characteristics being created at the shock. The dotted line
represents the expansion wave. } \label{example2a}
\end{center}
\end{figure}

\subsection{Example 3}
The previous example was chosen so that the CAHE equations would
have a single shock.  In contrast, Example 3 \eref{Example3} was
chosen so that the homentropic Euler equations would have a single
traveling shock. Again a significant departure in behavior will be
noticed.

\subsubsection{Example 3a, the homentropic Euler equations}
First we examine the behavior of homentropic Euler equations for
Example 3a. Figure \ref{example3a} shows the simulation for Example
3a. The simulation was conducted with a resolution of $2^{12}$.

Figures \ref{example3a}a and \ref{example3a}b show a single shock.
The shock is close to stationary but is progressing to the right.
Figures \ref{example3a}c and \ref{example3a}d show the values of
$v^+$ and $v^-$.   The discontinuity can be seen in both Riemann
invariants.  This is noticeably different than the behavior of the
CAHE equations where a single traveling shock will appear in only
one of the invariants.

As in the previous examples Figures \ref{example3a}e and
\ref{example3a}f are not graphs of the actual simulation, but
sketches that depict the behavior of the giving simulation. The
$v^+$ characteristics to the left of the discontinuity travel faster
than those to the right with the speed of the shock found in between
those speeds. Thus the characteristics collide and cause the shock.
The $v^-$ characteristics to the left of the discontinuity travel
faster than those to the right, however both are slower than the
speed of the shock.  Thus there is an area devoid of characteristics
between the left characteristics and the shock.  This is filled with
new characteristics that originate from the shock.

\begin{figure}[!ht]
\begin{center}
\begin{minipage}{0.38\linewidth} \begin{center}
  \includegraphics[width=.9\linewidth]{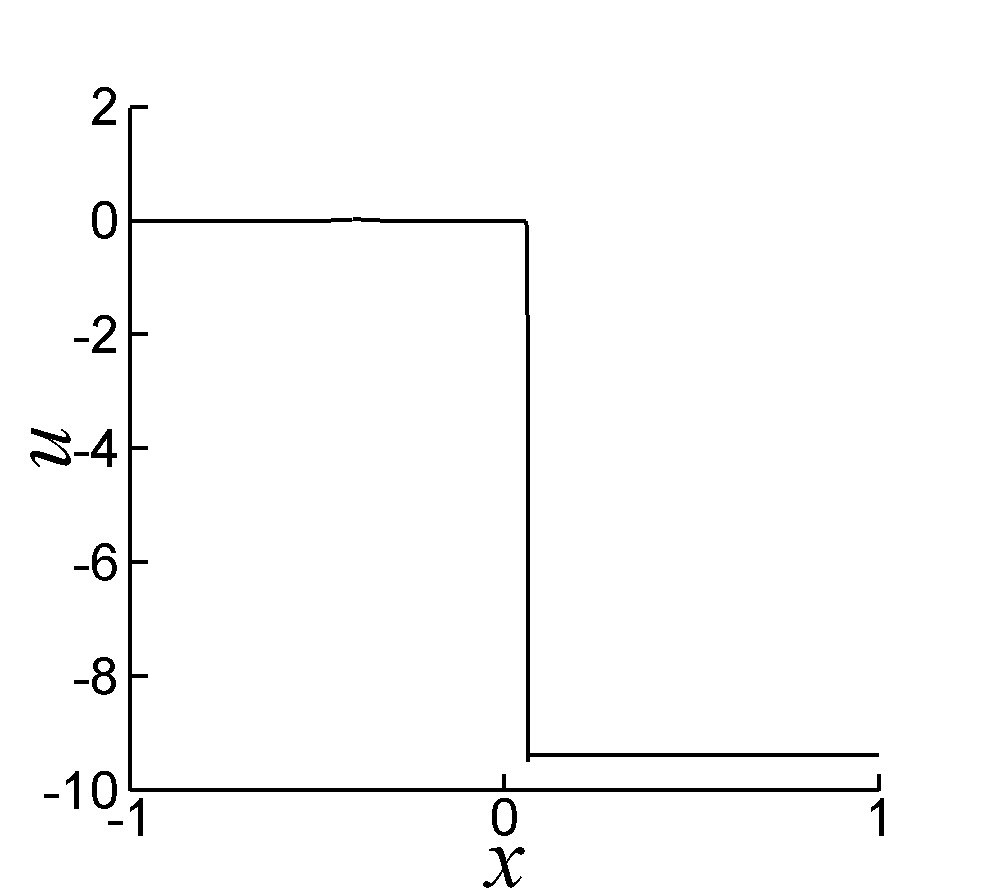}
\end{center} \end{minipage}
\begin{minipage}{0.38\linewidth} \begin{center}
  \includegraphics[width=.9\linewidth]{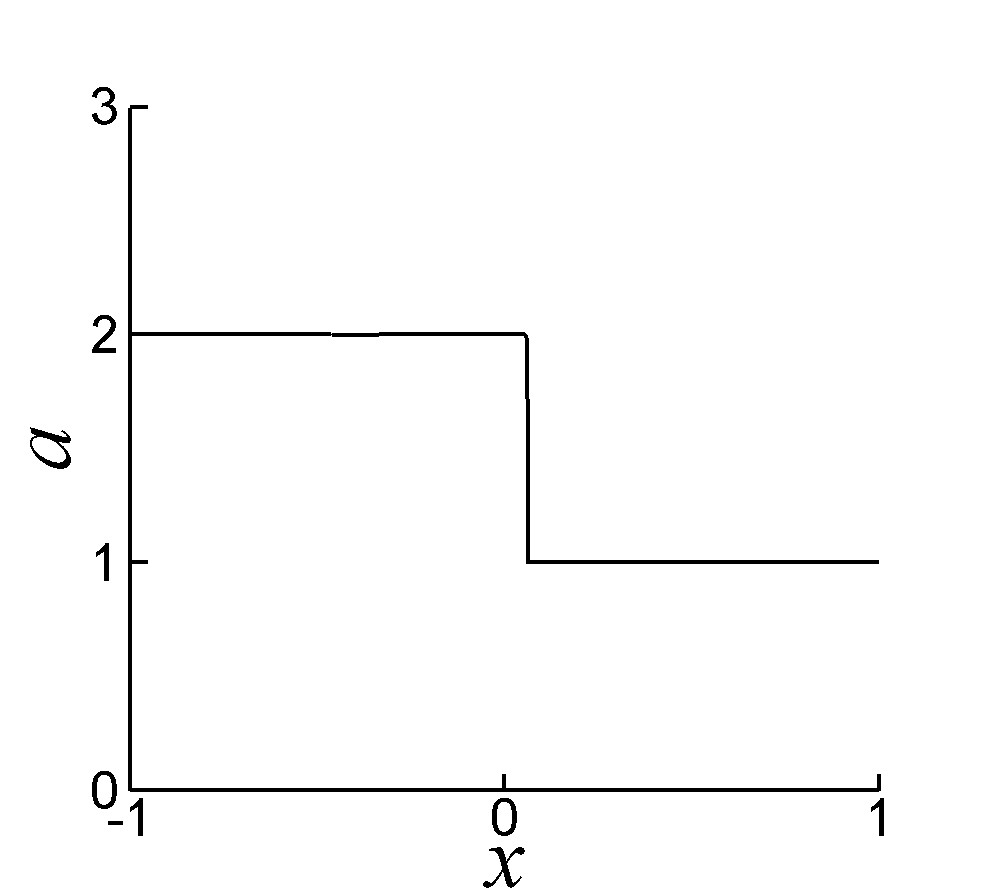}
\end{center} \end{minipage}\\ \vspace{2mm}
\begin{minipage}{0.38\linewidth}\begin{center} (a) \end{center} \end{minipage}
\begin{minipage}{0.38\linewidth}\begin{center} (b) \end{center}
\end{minipage}\\
\begin{minipage}{0.38\linewidth} \begin{center}
  \includegraphics[width=.9\linewidth]{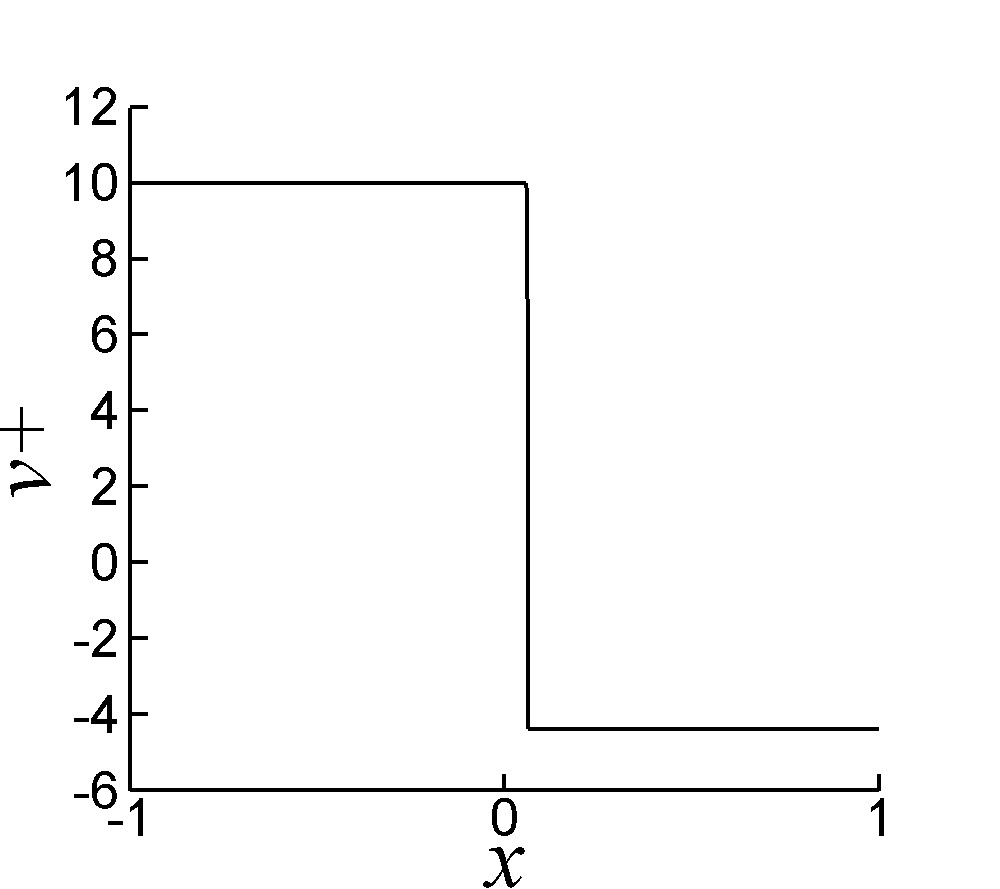}
\end{center} \end{minipage}
\begin{minipage}{0.38\linewidth} \begin{center}
  \includegraphics[width=.9\linewidth]{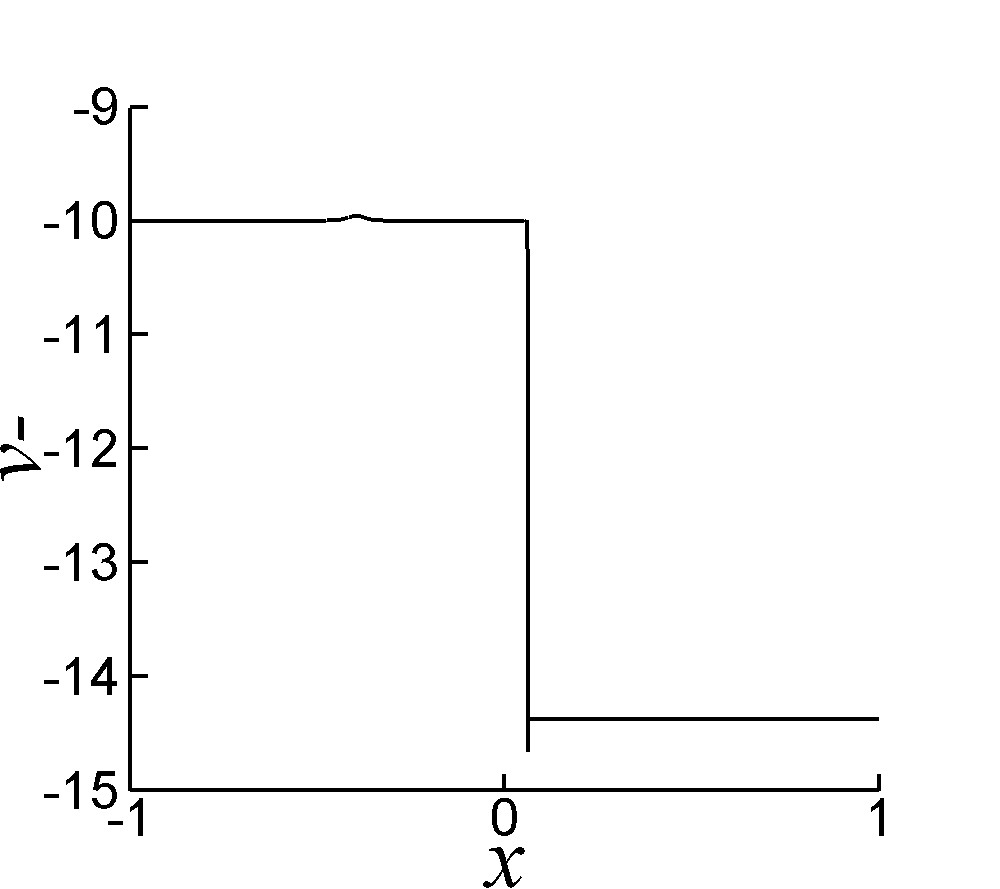}
\end{center} \end{minipage}\\ \vspace{2mm}
\begin{minipage}{0.38\linewidth}\begin{center} (c) \end{center} \end{minipage}
\begin{minipage}{0.38\linewidth}\begin{center} (d) \end{center}
\end{minipage}\\
\begin{minipage}{0.38\linewidth} \begin{center}
  \includegraphics[width=.9\linewidth]{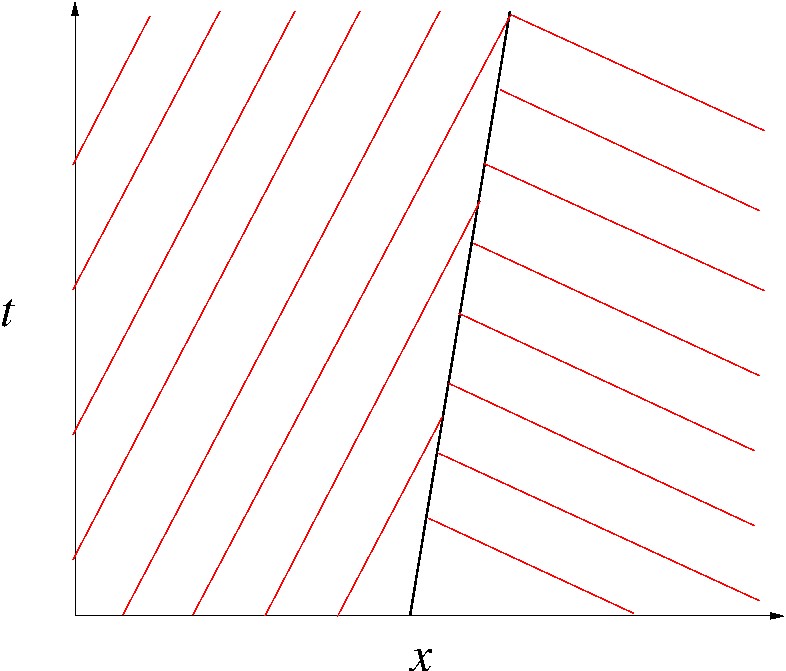}
\end{center} \end{minipage}
\begin{minipage}{0.38\linewidth} \begin{center}
  \includegraphics[width=.9\linewidth]{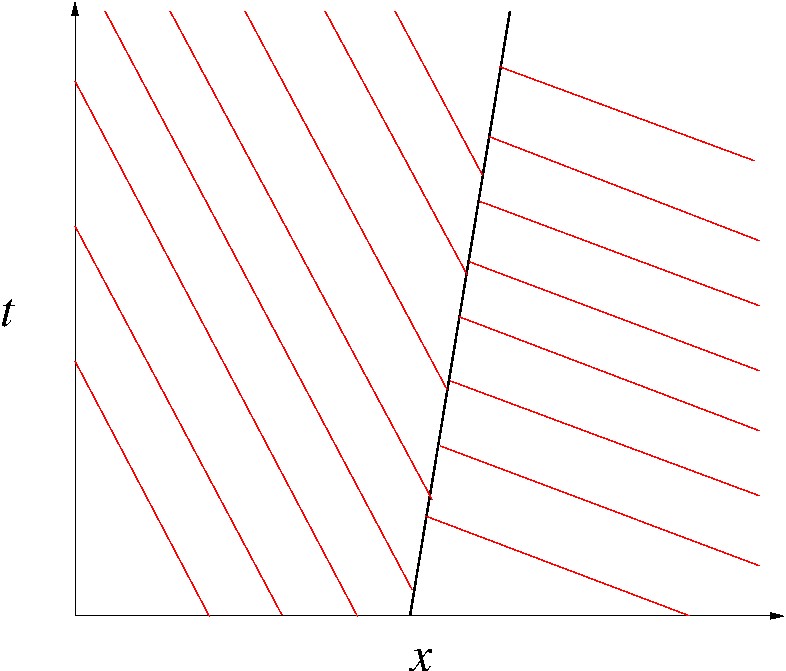}
\end{center} \end{minipage}\\ \vspace{2mm}
\begin{minipage}{0.38\linewidth}\begin{center} (e) \end{center} \end{minipage}
\begin{minipage}{0.38\linewidth}\begin{center} (f) \end{center}
\end{minipage}\vspace{-2mm}
\caption{Example 3a at time $t=0.2$.  This example was chosen to
have a single shock for the homentropic Euler equations which is
clearly visible in (a) and (b).  The shock is moving to the right
but clearly not as quickly as previous examples.  (c) and (d) show
the Riemann invariants $v^+$ and $v^-$.  (e) and (f) are sketches of
the characteristics for $v^+$ and $v^-$.  Again in (f) $v^-$
characteristics are being created at the shock. } \label{example3a}
\end{center}
\end{figure}

\subsubsection{Example 3b, the CAHE equations}
Now we examine the CAHE equations for Example 3b. Notice that with
these initial conditions there is a discontinuity in the both
variables $v^+$ and $v^-$. Thus it is to be expected that there will
be two phenomenon, either expansion waves or shocks.  In this case,
both will be shocks.

Figure \ref{example2b} shows the simulation for Example 3b.  The
simulation was conducted with a resolution of $2^{12}$  and
$\alpha=0.02$.

 Figures \ref{example2b}a and \ref{example2b}b clearly
show two distinct shocks, close together, progressing to the left.
These shocks can also be seen in Figures \ref{example2b}c and
\ref{example2b}d in the variables $v^+$ and $v^-$.

Figures \ref{example2b}e and \ref{example2b}f show the paths of the
$v^+$ and $v^-$ characteristics respectively.  The line across the
center of the graph represents the remapping of characteristics as
they had grown too close together for convenient computation. For
both sets of characteristics there is a convergence of the
characteristics towards the shock. This demonstrates that it is a
stable discontinuity and that a perturbation will not turn these
into expansion waves. Again the characteristics are bent towards
each other, but never intersect.

\begin{figure}[!ht]
\begin{center}
\begin{minipage}{0.38\linewidth} \begin{center}
  \includegraphics[width=.9\linewidth]{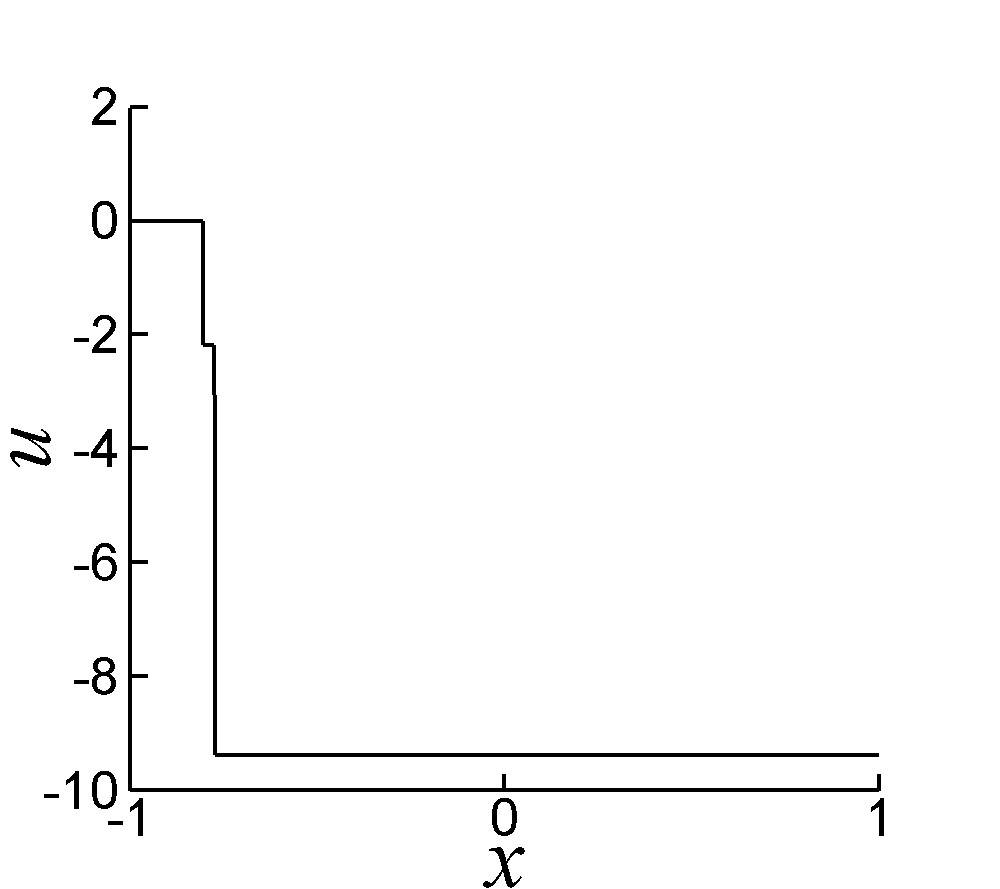}
\end{center} \end{minipage}
\begin{minipage}{0.38\linewidth} \begin{center}
  \includegraphics[width=.9\linewidth]{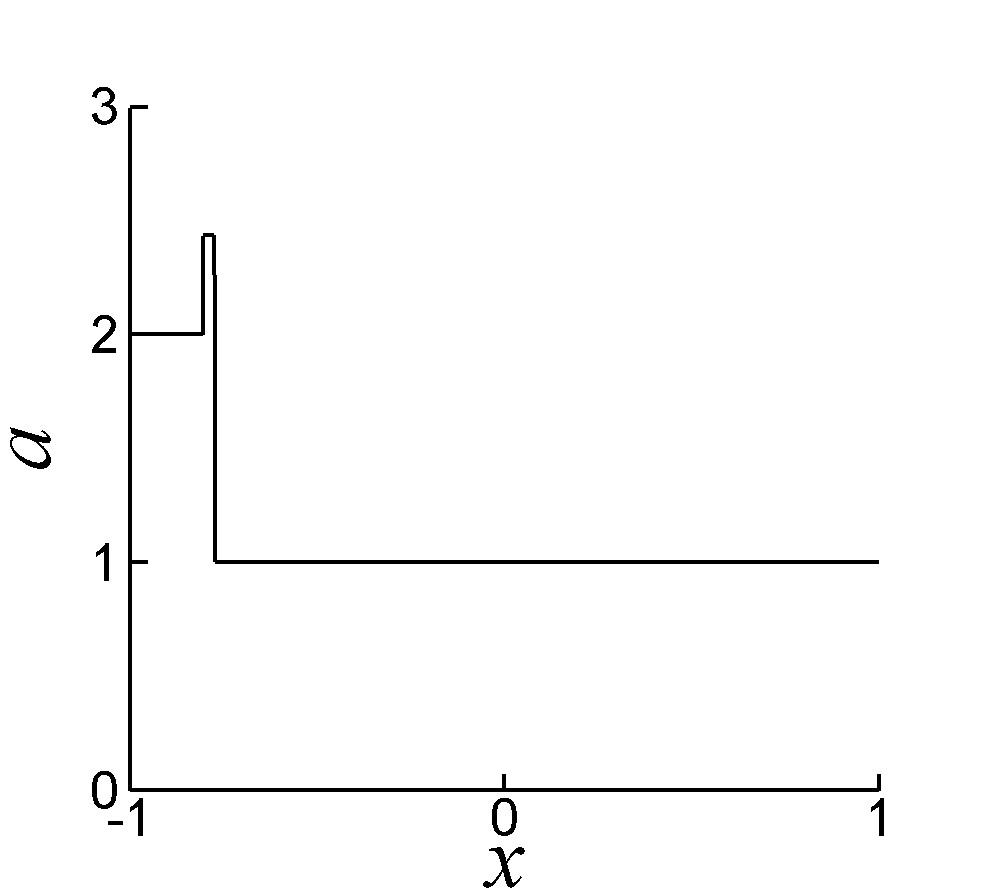}
\end{center} \end{minipage}\\  \vspace{2mm}
\begin{minipage}{0.38\linewidth}\begin{center} (a) \end{center} \end{minipage}
\begin{minipage}{0.38\linewidth}\begin{center} (b) \end{center}
\end{minipage}\\
\begin{minipage}{0.38\linewidth} \begin{center}
  \includegraphics[width=.9\linewidth]{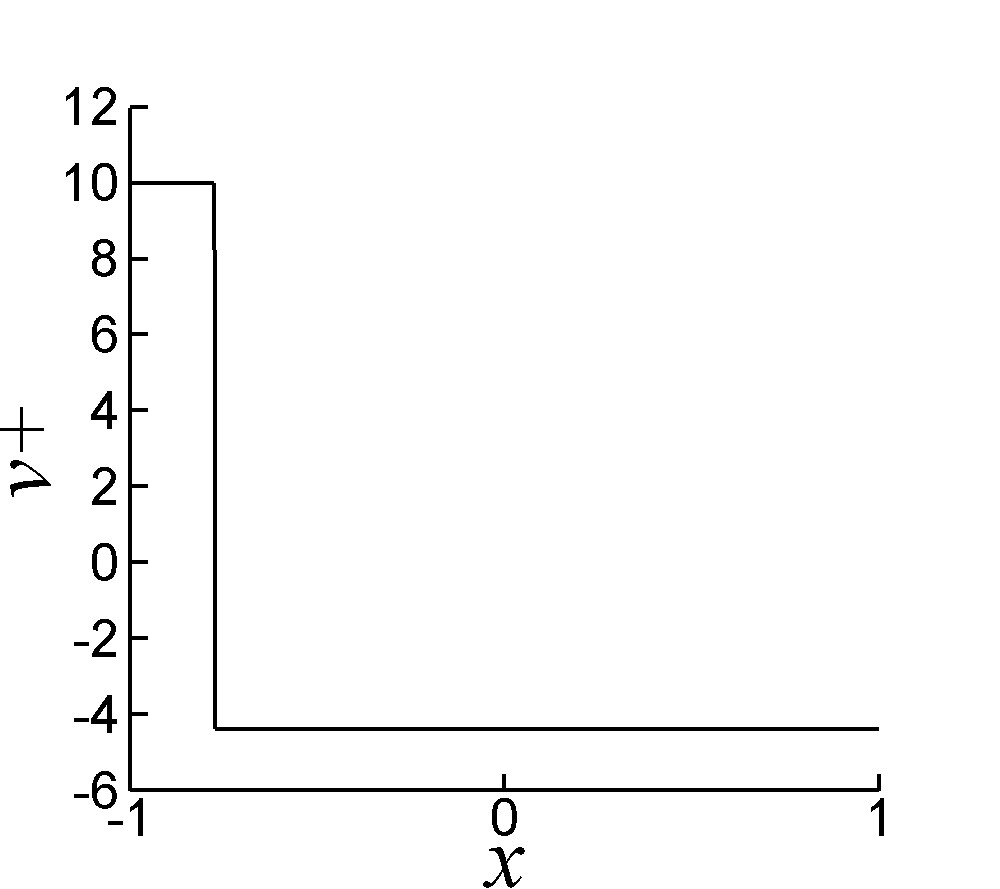}
\end{center} \end{minipage}
\begin{minipage}{0.38\linewidth} \begin{center}
  \includegraphics[width=.9\linewidth]{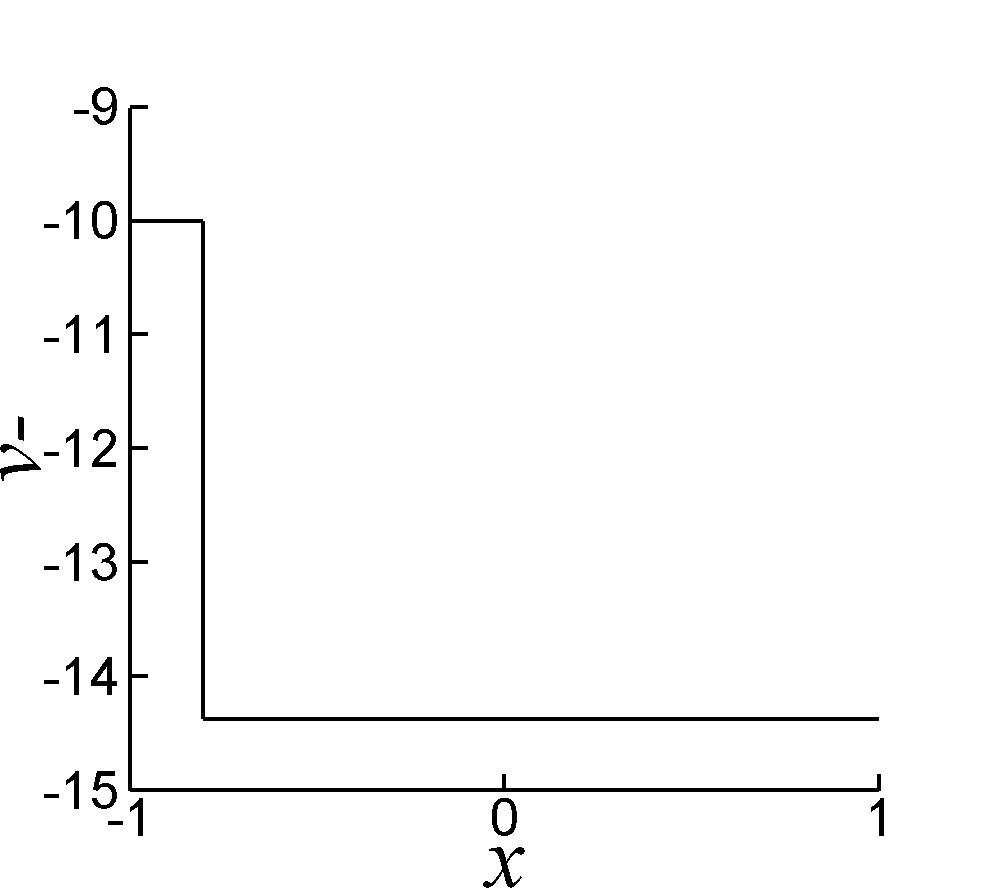}
\end{center} \end{minipage}\\ \vspace{2mm}
\begin{minipage}{0.38\linewidth}\begin{center} (c) \end{center} \end{minipage}
\begin{minipage}{0.38\linewidth}\begin{center} (d) \end{center}
\end{minipage}\\
\begin{minipage}{0.38\linewidth} \begin{center}
  \includegraphics[width=.9\linewidth]{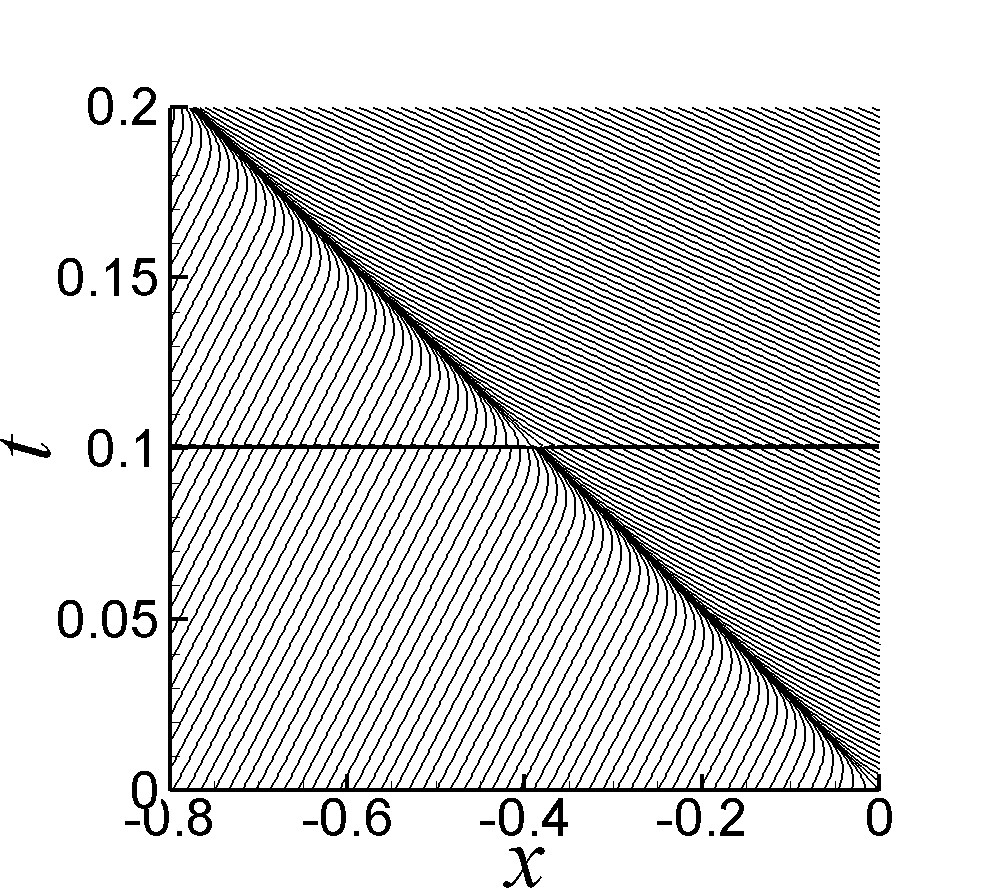}
\end{center} \end{minipage}
\begin{minipage}{0.38\linewidth} \begin{center}
  \includegraphics[width=.9\linewidth]{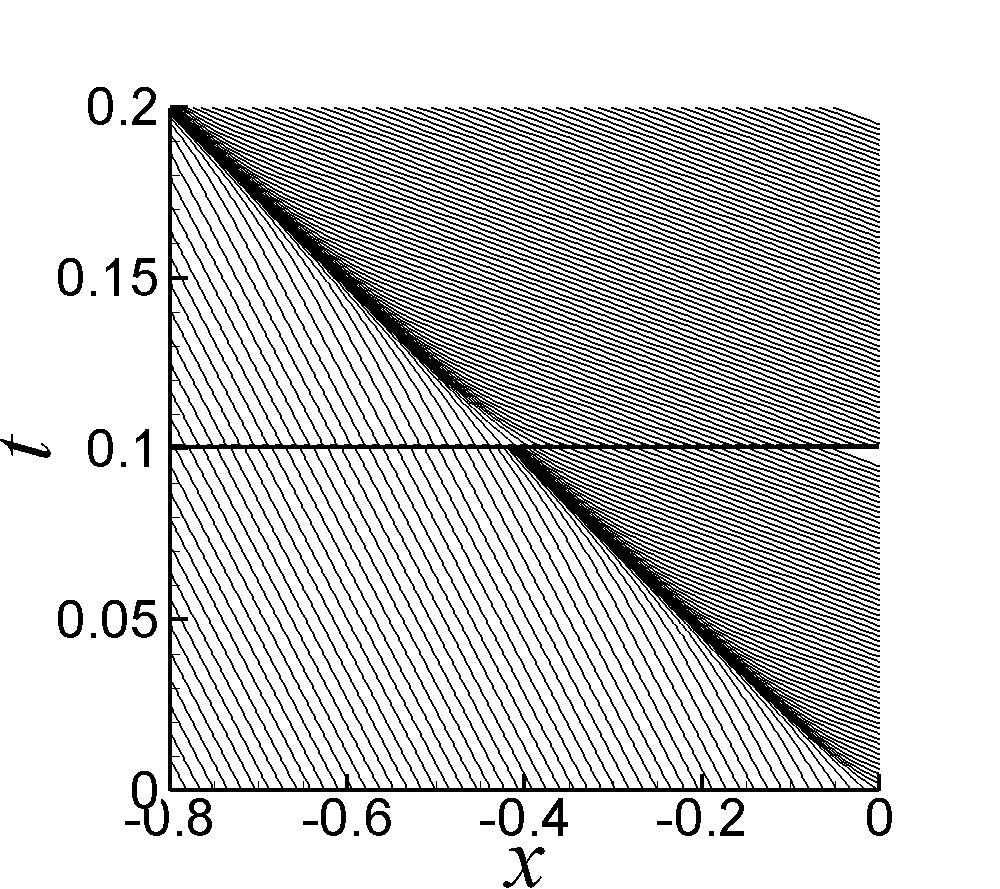}
\end{center} \end{minipage}\\ \vspace{2mm}
\begin{minipage}{0.38\linewidth}\begin{center} (e) \end{center} \end{minipage}
\begin{minipage}{0.38\linewidth}\begin{center} (f) \end{center}
\end{minipage}\vspace{-2mm}
\caption{Example 3b at time $t=0.2$ (a) and (b) show that there are
two distinct left traveling shocks in the velocity and speed of
sound. (c) and (d) show that these shocks exist separately in the
Riemann invariants $v^+$ and $v^-$. (e) and (f) show the
characteristics for $v^+$ and $v^-$.  The horizontal line at time
$t=0.1$ represents the remapping of the simulation. In both (e) and
(f) the characteristics are converging to the shocks, while never
intersecting.} \label{example3b}
\end{center}
\end{figure}

\subsection{Comparisons between the homentropic Euler and the CAHE equations}
From these examples and previous sections we can begin to draw
comparisons between the homentropic Euler equations and the CAHE
equations.

Both sets of equations are capable of forming shocks and expansion
waves from the Riemann problem.  For homentropic Euler equations it
is well established that for the Riemann problem there will often be
both a shock and an expansion wave formed.  With the CAHE equations,
introduced in this paper, there will always be two shocks formed.
However, if the initial conditions are slightly smoother, some of
those discontinuities will be found to be unstable and result in an
expansion wave.  Thus both behaviors can be said to be found in both
equations.

In terms of speed of the shocks, we have found there to be
differences.  In the homentropic Euler equations, the speed of the
shocks are determined by the Rankine-Hugoniot conditions, in order
to preserve mass and momentum.  With the CAHE equations the speed of
a discontinuity is determined by the speed of the averaged
characteristics at the location of the discontinuity.  In section
\ref{shockspeedsection}, it can be seen that with the averaging
chosen for this paper that the shock speed differs from those of the
homentropic Euler equations.  From this it is also clear that while
the homentropic Euler equations may be formally regained from the
CAHE equations by letting $\alpha \to 0$, the solution will not
converge to weak solutions of the homentropic Euler equations.  It
may be possible, however, to choose an averaging scheme such that
the shock speeds between the two equations are the same or similar.

Additionally the conditions under which a single traveling shock
differs between the two equations as is demonstrated in Examples 2
and 3.  For a single shock to form with the CAHE equations either
$v^+$ or $v^-$ must be constant.  Even if it were possible to find
an averaging scheme such that the shock speeds between the two
equations are the same, these conditions would not change.  For the
homentropic Euler equations, the condition is tied in with the
Rankine-Hugoniot conditions.

The final difference noted between these equations is significant.
As seen in Examples 2 and 3, for the homentropic Euler equations,
there can be areas devoid of characteristics.  These can be filled
with expansion waves or by new characteristics originating from the
shock.  By averaging the characteristics in the CAHE equations,
there will be no areas devoid of characteristics.  Thus there will
be no generation of new characteristics.  This can cause significant
differences in behavior as seen in Example 2, where $v^-$ remains a
constant for the CAHE equations, but not for the homentropic Euler
equations.  This again is a property that is unaffected by the
averaging scheme chosen.

\section{Conclusions}

The CAHE equations were derived by taking the homentropic Euler
equations and spatially averaging the characteristics. This led to a
new set of equations that has many interesting properties. Existence
and uniqueness proofs, Theorem \ref{existencetheorem} and
\ref{existencetheorem2}, were proven by establishing that the
characteristics of the equations never intersect.  In the process
this established that any initial conditions in $C^1(\mathbb{R})$
will have a solution in $C^1(\mathbb{R})$. Furthermore any
discontinuities in the initial conditions will remain and be
convected in the solution for all time.

The speeds of shocks in the CAHE equations were found to be
determined by the speed of the characteristics at the location of
those shocks.  Furthermore, different averaging schemes were shown
to provide different shock speeds.

The Riemann problem was then examined, where solutions were
generally found to consist of two traveling discontinuities.
However, often one of the discontinuities can prove to be unstable
and by smoothing the initial conditions will develop into an
expansion wave instead.

Finally using some numerical examples and results from the previous
sections the CAHE equations were compared and contrasted with the
homentropic Euler equation from which they originated.  Both can
generate shocks and expansion waves from the Riemann problem. With
the averaging scheme employed in the paper, the speed of the shocks
differ.  For the examples chosen, the behavior of the equations
proved to be significantly different.  Finally, the homentropic
Euler equations showed the generation of new characteristics, while
with the CAHE equations, new characteristics will never be
generated.  From this is seems clear that as $\alpha \to 0$ the
solutions to the CAHE equations will not converge to weak solutions
of the homentropic Euler equations.

The CAHE equations have proven to have convenient existence and
uniqueness properties.  When continuous initial conditions are
chosen, the solution will remain continuous.  However, the equations
seem to display behavior with too significant of departure from the
homentropic Euler equations to be of use in gas dynamic
applications. It is possible that this could be rectified by taken a
one-sided average of the characteristics, but this concept was not
pursued here.

\section{Acknowledgments}
The research in this paper was partially supported by the AFOSR
contract FA9550-05-1-0334.

\bibliography{../RefA1}
\bibliographystyle{unsrt}

\end{document}